\newtheorem{theorem}{Theorem}[section]
\newtheorem{proposition}{Proposition}[section]
\newtheorem{lemma}[theorem]{Lemma}
\theoremstyle{definition}
\newcommand{\diag}{\mathsf{diag}}
\newcommand{\diam}{\mathsf{diam}}
\newcommand{\dmaxmin}{\mathfrak{d}}
\newcommand{\ImL}{\mathsf{Im}(L_G)}
\newcommand{\vol}{\mathsf{vol}}
\newcommand{\Real}{\mathbb R}
\newcommand{\Poisson}{\mathcal P}
\newcommand{\ceil}[1]{\left\lceil #1 \right\rceil}
\newcommand{\braket}[1]{\left\langle #1 \right\rangle}
\newcommand{\norm}[1]{\left\lVert #1 \right\rVert}
\newcommand{\snorm}[1]{\lVert #1 \rVert}
\newcommand{\poly}{\mathrm{poly}}
\newcommand{\mybar}[1]{\lambda}
\title{Solving Laplacian Systems in Logarithmic Space}
\author{
Fran{\c c}ois Le Gall\\
Graduate School of Informatics\\
Kyoto University\\ 
\url{legall@i.u-kyoto.ac.jp}}
\begin{document}
\date{}
\maketitle
\thispagestyle{empty}
\setcounter{page}{1}

\begin{abstract}
We investigate the space complexity of solving linear systems of equations. While all known deterministic or randomized algorithms solving a square system of $n$ linear equations in $n$ variables require $\Omega(\log^2 n)$ space, Ta-Shma (STOC 2013) recently showed that on a quantum computer an approximate solution can be computed in logarithmic space, giving the first explicit computational task for which quantum computation seems to outperform classical computation with respect to space complexity. In this paper we show that for systems of linear equations in the Laplacian matrix of graphs, the same logarithmic space complexity can actually be achieved by a classical (i.e., non-quantum) algorithm. More precisely, given a system of linear equations $Lx=b$, where~$L$ is the (normalized) Laplacian matrix of a graph on $n$ vertices and $b$ is a unit-norm vector, our algorithm outputs a vector $\tilde x$ such that $\norm{\tilde x -x}\le 1/\poly(n)$ and uses only $O(\log n)$ space if the underlying graph has polynomially bounded weights. We also show how to estimate, again in logarithmic space, the smallest non-zero eigenvalue of $L$.
\end{abstract}

\newpage

\section{Introduction}
\paragraph{Background.}
Ta-Shma showed a few years ago that several fundamental tasks in linear algebra, such as computing the eigenvalues of an $n\times n$ matrix $A$ or finding a solution to a linear system of equations $Ax=b$, can be solved approximately with polynomial precision on a quantum computer using $O(\log n)$ space~\cite{TaShmaSTOC13}. In comparaison, no $o(\log^2 n)$-space classical (i.e., non-quantum) algorithm is known for these problems: the best (in the space complexity setting) known classical algorithms that solve them exactly use $O(\log^2 n)$ space \cite{Berkovitz84,Borodin+82,Csanky+FOCS75} and nothing better is known for only approximating the solutions. This breakthrough was the first example of concrete computational tasks for which a quantum algorithm outperforms the best known classical algorithms in the standard space complexity setting. The power of space-bounded quantum algorithms for matrix problems comes from their ability to represent an $n$-dimensional vector using $O(\log n)$ quantum bits of memory and consequently estimate the entries of $A^k$, given an $n\times n$ matrix~$A$ and an integer $k\le\poly(n)$, in logarithmic quantum space. In comparison, the best known algorithm for this task 
uses $O(\log^2 n)$ space \cite{Borodin+82}.

 Doron and Ta-Shma \cite{Doron+ICALP15} recently showed how to ``dequantize'' the algorithm \cite{TaShmaSTOC13} for classes of matrices where $A^k$ can be computed space efficiently, and in particular when $A$ is the adjacency matrix of a graph, in which case entries of $A^k$ can be estimated by a classical algorithm running~$k$ steps of the random walk corresponding to $k$. They focused on one task, computation of the eigenvalues, and succeeded in constructing a $O(\log n)$-space classical algorithm that computes the eigenvalues of such a matrix $A$, but only with constant precision (a much weaker accuracy that the polynomial precision obtained in the quantum case).

    
\paragraph{Our results.}
In this paper we investigate the classical (i.e., non-quantum) space complexity of matrix problems for matrices associated with graphs. More precisely, we focus on the Laplacian matrices of undirected weighted graphs (defined in Section \ref{sec:prelim}).  

The following definition will be useful to state our results concisely: We say that an undirected weighted graph $G$ on $n$ vertices has polynomially bounded weights if the weight of any edge is upper bounded by a polynomial of $n$ and lower bounded by a polynomial of $1/n$. Unweighted graphs, where the weight of each edge is one, are examples of such graphs.

Our first result is stated in the following theorem.
\begin{theorem}\label{th:main1}
Let $G$ be an undirected weighted graph on $n$ vertices with polynomially bounded weights. For any $\epsilon,\gamma\in(0,1)$, an $\epsilon$-additive approximation of the solution of a Laplacian system corresponding to~$G$ can be computed with probability at least $1-\gamma$ in $O(\log (n/\epsilon)+\log\log(1/\gamma))$ space.
\end{theorem}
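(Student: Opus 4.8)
The plan is to reduce solving $L_G x = b$ to estimating, in small space, the coordinates of a suitable truncated power series applied to $b$, and to carry out those estimations by simulating short random walks on $G$. Write the normalized Laplacian as $L_G = I - M$, where $M=D^{-1/2}AD^{-1/2}$ is the normalized adjacency matrix, whose eigenvalues lie in $[-1,1]$. I would first pass to the lazy-walk operator $\widetilde M=\tfrac12(I+M)$, which is positive semidefinite, has eigenvalues in $[0,1]$, and satisfies $L_G=2(I-\widetilde M)$; one checks that $\widetilde M = D^{1/2}\widetilde W D^{-1/2}$, where $\widetilde W$ is the transition matrix of the lazy random walk on $G$ (stay put with probability $1/2$, otherwise move to a neighbour with probability proportional to the incident edge weight). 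Since $G$ has polynomially bounded weights, the smallest non-zero eigenvalue $\lambda_{\min}$ of $L_G$ is at least $1/\poly(n)$ (a standard bound on the spectral gap of weighted graphs), so on $\ImL$ the eigenvalues of $\widetilde M$ are at most $1-\lambda_{\min}/2$. As $b\in\ImL$ by the definition of a Laplacian system (see Section~\ref{sec:prelim}), the solution is $x=L_G^+b=\tfrac12\sum_{k\ge 0}\widetilde M^k b$, and truncating at a suitable $N=\poly(n)\cdot\log(n/\epsilon)$ terms gives $\hat x:=\tfrac12\sum_{k=0}^{N-1}\widetilde M^k b$ with $\norm{\hat x-x}\le \tfrac12(1-\lambda_{\min}/2)^N\cdot(2/\lambda_{\min})\le\epsilon/2$.

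Next I would estimate each coordinate of $\hat x$ by Monte Carlo. From $\widetilde M^k=D^{1/2}\widetilde W^k D^{-1/2}$ one gets $(\widetilde M^k b)_u=\sqrt{d_u}\cdot\mathbb E_v[b_v/\sqrt{d_v}]$, where $v$ is the endpoint of a $k$-step lazy walk started at $u$; since the weights are polynomially bounded, every degree $d_u$ lies in $[1/\poly(n),\poly(n)]$, so this random variable is bounded by $\poly(n)$ in absolute value. The algorithm processes the vertices one at a time: for each $u$, and each $k=0,\dots,N-1$, it runs $T$ independent simulations of the $k$-step lazy walk from $u$, averages the observed values $b_v/\sqrt{d_v}$ into an estimate $\widehat m_{u,k}$ of $\mathbb E_v[b_v/\sqrt{d_v}]$, and adds $\tfrac12\sqrt{d_u}\,\widehat m_{u,k}$ to an accumulator; after the loop over $k$ it writes the accumulator to the output tape as $\tilde x_u$ and moves on to $u+1$. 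Simulating one step of the lazy walk, including sampling a weighted neighbour, uses only $O(\log n)$ bits of workspace, plus an $O(\log N)=O(\log(n/\epsilon))$-bit step counter.

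For the error and space analysis I would require each of the at most $nN$ empirical means $\widehat m_{u,k}$ to be within $\delta:=\epsilon/\poly(n)$ of its expectation; this forces each coordinate of $\tilde x$ to within $\epsilon/(2\sqrt n)$ of the corresponding coordinate of $\hat x$, hence $\norm{\tilde x-\hat x}\le\epsilon/2$, and together with the truncation bound $\norm{\tilde x-x}\le\epsilon$. Because the per-walk random variable is bounded, Hoeffding's inequality makes $T=\poly(n/\epsilon)\cdot\log(nN/\gamma)=\poly(n/\epsilon)\cdot\log(1/\gamma)$ samples enough to guarantee, via a union bound over the $nN$ estimates, total failure probability at most $\gamma$. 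The dominant space costs are then: the running Monte Carlo sum, a number of magnitude $O(T\cdot\poly(n))=\poly(n/\epsilon)\cdot\log(1/\gamma)$ stored to suitable precision, i.e.\ $O(\log(n/\epsilon)+\log\log(1/\gamma))$ bits; the counter up to $T$, likewise $O(\log(n/\epsilon)+\log\log(1/\gamma))$ bits; the accumulator for $\tilde x_u$, a number of magnitude $\poly(n/\epsilon)$ to precision $\epsilon/\poly(n)$, $O(\log(n/\epsilon))$ bits; and the indices $u,k$ with the random-walk scratch space, $O(\log(n/\epsilon))$ bits. This gives the claimed bound $O(\log(n/\epsilon)+\log\log(1/\gamma))$.

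I expect the main difficulty to be not any single step but the tension between two requirements on the approximating polynomial: it must converge on the entire spectrum of $M$ — which is why the plain Neumann series of $(I-M)^{-1}$ fails at bipartite components (eigenvalue $-1$) and one symmetrizes to the lazy walk — and its coefficients must be representable in $O(\log n)$ space, which is why the truncated Neumann series of $(I-\widetilde M)^{-1}$, whose coefficients are all $1$, is preferable to, say, a Chebyshev acceleration. The second delicate point is obtaining the $\log\log(1/\gamma)$ rather than $\log(1/\gamma)$ space dependence: this hinges on the walk-based estimators being bounded, so that Hoeffding's inequality alone yields confidence $1-\gamma$ from $O(\log(1/\gamma))$ samples — with no median-of-means amplification needed — which keeps the sample counter and the Monte Carlo accumulator only double-logarithmic in $1/\gamma$.
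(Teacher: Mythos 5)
Your proposal is correct, and its overall architecture --- reduce $L_G^\dagger b$ to a $\poly(n/\epsilon)$-term polynomial in a walk matrix, estimate each coordinate of the result by Monte Carlo simulation of random walks, and get the $\log\log(1/\gamma)$ space dependence from Hoeffding plus a union bound over boundedly many estimates --- is exactly that of the paper's Algorithm $\mathcal{A}$. The genuine difference is the choice of approximating polynomial. The paper (Theorem \ref{th:formula}) writes $1/\lambda=\int_0^\infty e^{-t\lambda}\,dt$, discretizes the integral by a Riemann sum, and expands each $e^{-s\lambda}$ in powers of $(1-\lambda)$ with Poisson weights; this keeps the non-lazy walk $P_G$ but yields a double sum over $j$ and $k$ and forces a separate sampling subroutine (Lemma \ref{lemma:Poisson}) just to produce the coefficients $\Poisson_{jT/N}(k)$ in logarithmic space. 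You instead lazify the walk so that the spectrum becomes nonnegative and use the truncated Neumann series $\frac{1}{2}\sum_{k<N}\widetilde M^k$ with $\widetilde M=\frac{1}{2}(I+D_G^{1/2}P_GD_G^{-1/2})$, whose coefficients are all $\frac{1}{2}$ and need no estimation; the price is that you must simulate the lazy walk rather than $P_G$ (a trivial change) and must note, as you correctly do, that lazification is what removes the eigenvalue $-1$ arising from bipartite components, where the plain Neumann series for $(I-M)^{-1}$ diverges. Your route is the more elementary and self-contained of the two; the paper's heat-kernel route is the one that connects to the Chung--Simpson PageRank machinery it cites. Amusingly, your operator $\widetilde M$ is precisely the matrix $M_G$ the paper introduces in Section \ref{sec:l2}, where its powers are estimated by sampling binomial coefficients $2^{-k}\binom{k}{s}$ rather than by simulating the lazy walk directly as you do. Both routes rest on the same two external facts, which you invoke correctly: $b\in\ImL$, so the series converges on the relevant subspace, and $\lambda_2\ge 1/(\diam(G)\vol(G))\ge 1/\poly(n)$ for polynomially bounded weights, which controls the truncation length $N$ and keeps all counters at $O(\log(n/\epsilon))$ bits.
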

Theorem \ref{th:main1} shows that if $G$ has polynomially bounded weights, linear systems of equations involving the Laplacian matrix of $G$ can be solved approximately with polynomial precision in $O(\log n)$ space, which gives a classical algorithm with the same space complexity and the same precision as Ta-Shma's quantum algorithm \cite{TaShmaSTOC13}. Note that Laplacian systems are a natural and very well-studied subclass of linear systems, and have a multitude of algorithmic applications (see \cite{Vishnoi13} for a survey). While in the time complexity setting extremely fast algorithms for Laplacian systems have been obtained in the past decade \cite{Kelner+STOC13,Peng+STOC14,Spielman+STOC04,Spielman+11,Spielman+13,Spielman+14}, we are not aware of any work on the space complexity of this problem. 

Roughly speaking, Theorem \ref{th:main1} is proved by showing that solving a Laplacian system reduces to computing powers of the normalized adjacency matrix of the graph, and then computing space-efficiently the entries of the powers by using random walks. We note that Chung and Simpson \cite{Chung+15} considered a fairly similar approach based on random walks to design time-efficient algorithms solving Laplacian systems under boundary conditions. In the present paper the focus is different since we consider space complexity. Indeed, the main difficulty here is to show that the reduction to random walks can be implemented space efficiently. 

The values of the eigenvalues of the Laplacian of a graph are related to many important properties of the graph --- these relations are precisely the main subject of the field of spectral graph theory. The smallest non-zero eigenvalue is especially a fundamental quantity. If the graph is connected, then this eigenvalue is called the algebraic connectivity of the graph and controls for example expanding properties of the graph. Our second result, stated in the following theorem, shows that the smallest non-zero eigenvalue of the Laplacian matrix of an undirected weighted graph with polynomially bounded weights can be approximated with constant multiplicative precision in logarithmic space.
\begin{theorem}\label{th:main2}
Let $G$ be an undirected weighted graph on $n$ vertices with polynomially bounded weights. For any $\delta,\gamma\in(0,1)$, a $\delta$-multiplicative approximation of the smallest non-zero eigenvalue of the Laplacian matrix of $G$ can be computed with probability at least $1-\gamma$ in $O(\frac{1}{\delta}\log n+\log\log(1/\gamma))$ space.
\end{theorem}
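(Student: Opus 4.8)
\textbf{Proof sketch of Theorem~\ref{th:main2}.}
Let $V'$ be the set of non-isolated vertices of $G$; a vertex forming a component by itself contributes only the eigenvalue $0$ to the Laplacian and may be discarded (if $V'=\emptyset$ then $L=0$ and there is nothing to report). Write $A,D$ for the weighted adjacency and degree matrices of $G$ restricted to $V'$, let $N=D^{-1/2}AD^{-1/2}$, and set $P=\tfrac12(I+N)$, the lazy normalized walk matrix. Then $P$ is symmetric positive semidefinite with spectrum in $[0,1]$, it is block diagonal over the connected components of $V'$ (each of size $\ge 2$), and on each block its top eigenvalue is exactly $1$ and simple. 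Hence $1$ is an eigenvalue of $P$ of multiplicity $c$, where $c$ is the number of connected components of $G$ of size at least $2$, and, writing $\rho$ for the largest eigenvalue of $P$ that is strictly below $1$, the smallest non-zero eigenvalue of $L=I-N$ equals $\lambda=2(1-\rho)$. Since $\rho=\max_C\rho_C$ over the components $C$, extracting this single number $\rho$ automatically yields $\min_C\lambda_C$, the desired eigenvalue of $L$; and $c$ is computable in $O(\log n)$ space by Reingold's logarithmic-space algorithm for undirected connectivity.

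The plan is to recover $\rho$ from traces of powers of $P$. Put $T_k:=\mathrm{Tr}(P^k)-c=\sum_{i:\ \mu_i<1}\mu_i^{\,k}$, the $\mu_i$ being the eigenvalues of $P$. Because all $\mu_i\in[0,1]$ we have $\rho^k\le T_k\le |V'|\,\rho^k$, hence $T_k^{1/k}\in[\rho,\,|V'|^{1/k}\rho]$, and so $2\bigl(1-T_k^{1/k}\bigr)\in\bigl[\lambda-\tfrac{4\ln|V'|}{k},\ \lambda\bigr]$ for $k\ge\ln|V'|$. To obtain a $\delta$-multiplicative approximation it therefore suffices to evaluate $2\bigl(1-T_k^{1/k}\bigr)$ for some $k$ of order $\tfrac{\ln n}{\delta\lambda}$. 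Since $G$ has polynomially bounded weights, $\lambda\ge1/\poly(n)$ --- the very same a priori lower bound on the relevant part of the spectrum that makes Theorem~\ref{th:main1} go through --- so a suitable power lies between a constant and $\poly(n)/\delta$. As $\lambda$ is unknown, the algorithm runs a doubling search over the $O(\log n+\log(1/\delta))$ candidate powers $k\in\{1,2,4,\dots,k_{\max}\}$ with $k_{\max}=\poly(n)/\delta$, and outputs $2\bigl(1-\widetilde T_K^{1/K}\bigr)$ for the largest $K$ in this list whose estimated trace $\widetilde T_K$ stays comfortably above the resolution floor $\sqrt{\beta}$, where $\beta$ is the additive accuracy fixed below. (Checking that this choice of $K$ keeps both the over- and the under-estimation of $\lambda$ within $\delta\lambda$ is a routine but slightly delicate calculation exploiting $\rho=1-\lambda/2$ and $\mu_i\in[0,1]$; the extreme regime in which $\lambda$ lies within $\sqrt{\beta}$ of $2$ is handled separately by simply outputting $2$.)

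Each trace is estimated by random walks. Diagonal entries are unchanged under the diagonal conjugation $P=D^{1/2}\bigl(\tfrac{I+W}{2}\bigr)D^{-1/2}$ with $W=D^{-1}A$ the random-walk matrix, so $(P^k)_{vv}$ is the probability that a $k$-step lazy walk (stay with probability $1/2$, otherwise step to a neighbour with probability proportional to its edge weight) started at $v$ returns to $v$, whence $\mathrm{Tr}(P^k)=|V'|\cdot\mathbb E_{v}\bigl[\text{return in }k\text{ steps}\bigr]$ for $v$ uniform in $V'$. A Monte Carlo estimate over $M=O\bigl(|V'|^2\beta^{-2}\log(1/\gamma')\bigr)$ independent walks is within additive error $\beta$ of $\mathrm{Tr}(P^k)$ with probability $1-\gamma'$; simulating one walk needs $O(\log n)$ bits for the current vertex and $O(\log k_{\max})=O(\log n+\log(1/\delta))$ bits for a step counter, and across the $M$ walks only a sample counter (bounded by $M$) and a running accumulator persist. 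Setting $\gamma'=\gamma/\poly(\log n/\delta)$ and taking a union bound handles all candidate powers at once, and the final $K$-th root $\widetilde T_K^{1/K}$ is computed by binary search with repeated squaring in $O(\tfrac1\delta\log n)$-bit fixed point --- standard logarithmic-space arithmetic.

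It remains to fix $\beta$ and read off the space bound, and this is the crux. For the ``right'' power $k\approx\tfrac{\ln n}{\delta\lambda}$ one has $T_k\approx\rho^k=(1-\lambda/2)^{\Theta(\ln n/(\delta\lambda))}=n^{-\Theta(1/\delta)}$, so the trace must be resolved to additive error $\beta=n^{-\Theta(1/\delta)}$ in order to be seen above the sampling noise; with a large enough implied constant this also renders the error propagated through the $K$-th root negligible. Consequently $M=n^{\Theta(1/\delta)}\cdot\log(1/\gamma')$, so the sample counter and accumulator --- the dominant registers --- occupy $O(\log M)=O\bigl(\tfrac1\delta\log n+\log\log(1/\gamma)\bigr)$ bits, while the walk length costs only $O(\log n+\log(1/\delta))$ and the confidence only $\log\log(1/\gamma)$; every other register fits in $O(\tfrac1\delta\log n)$ bits, giving the claimed total. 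The main obstacle is exactly this point: multiplicative precision on the \emph{gap} --- equivalently on $1-\rho$, which can be as small as $1/\poly(n)$ --- forces exponential-in-$1/\delta$ accuracy on the trace of a matrix power, something that is free on a quantum computer and cheap at the constant precision of Doron and Ta-Shma but a genuine threat to a classical logarithmic-space budget; what rescues it is that the blow-up is confined to the \emph{number} of random walks sampled, hence to an $O(\tfrac1\delta\log n)$-bit counter rather than to the memory of any single walk, while the bound $\lambda\ge1/\poly(n)$ keeps the power $k$ polynomial.
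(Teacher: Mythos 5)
Your proposal is correct in its essentials but follows a genuinely different route from the paper. The paper runs a power method on $M_G=\frac12(I+D_G^{1/2}P_GD_G^{-1/2})$ restricted to $\ImL$: it introduces an explicit set $\Sigma$ of $n(n-1)/2$ two-coordinate test vectors, proves that at least one of them is orthogonal to $u_1$ with overlap at least $\frac{1}{\sqrt2 n\dmaxmin}$ with $u_2$, and outputs $2(1-\max_{v,k} C_2/C_1)$ where $C_1,C_2$ are random-walk estimates of $\snorm{M_G^kv}$ and $\snorm{M_G^{k+1}v}$; the key structural point is that the ratio can essentially never overestimate $1-\lambda_2/2$ for \emph{any} $v\in\ImL$, so maximizing over all of $\Sigma$ and all $k$ is safe. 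You instead use the trace identity $\mathrm{Tr}(P^k)-c=\sum_{\mu_i<1}\mu_i^k\in[\rho^k,\,n\rho^k]$, which eliminates the test-vector issue entirely (the trace automatically ``sees'' $\rho$), handles disconnected graphs in one pass, and replaces orthogonalization against $u_1$ by subtracting the exactly known multiplicity $c$ via Reingold's algorithm --- at the price of a $k$-th root extraction and a more delicate choice of $k$. Both approaches hinge on exactly the same quantitative phenomenon, which you identify correctly: multiplicative precision on the gap forces additive precision $n^{-\Theta(1/\delta)}$ on a walk statistic, hence $n^{\Theta(1/\delta)}$ samples, but this costs only an $O(\frac1\delta\log n)$-bit counter (the paper's threshold $\tau=\frac{1}{2(\sqrt2 n\dmaxmin)^{1+8/\delta}}$ plays the role of your $\sqrt\beta$). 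One point in your sketch deserves care: the bound $2(1-T_k^{1/k})\ge\lambda-\frac{4\ln|V'|}{k}$ is too weak when $\lambda$ is close to $2$, since it would demand $k\gtrsim\frac{\ln n}{\delta\lambda}$ while $T_k\approx\rho^k$ may then fall below the resolution floor for such $k$; what saves the ``largest passing $K$'' rule is the sharper estimate $2(1-T_k^{1/k})\ge\lambda-4\rho\,(n^{1/k}-1)$, whose error shrinks with $\rho$ precisely in the regime where only small $k$ survive the threshold. With that refinement (and the trivial-output case when even $K=1$ fails the threshold), the calculation does close in all regimes, and the space accounting matches the claimed $O(\frac1\delta\log n+\log\log(1/\gamma))$.
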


This result is obtained by developing a space-efficient version of a well known technique, called the power method and used for estimating the largest eigenvalue of a matrix, and showing how to implementing it space-efficiently using random walks.

To our knowledge the only related prior work on the space complexity of approximating the algebraic connectivity of a graph is the recent work by Doron and Ta-Shma \cite{Doron+ICALP15} mentioned above. Their techniques can be used to estimate the algebraic connectivity in logarithmic space with constant additive precision. Our techniques give constant multiplicative precision, which is a much stronger result (the smallest non-zero eigenvalue is always smaller than two; in many applications this eigenvalue is actually close to zero).

%

\section{Preliminaries}\label{sec:prelim}
\noindent{\bf General notations.}
In this paper, $\log$ always denote the natural logarithm.  We use $\Poisson_{s}$ to represent the Poisson distribution with parameter $s$ (for any real number $s>0$). This discrete probability distribution is defined as $\Poisson_{s}(k)=\frac{e^{-s}k^s}{k!}$ for any integer $k\ge 0$. 

For any real number $a<b$, the notation $(a,b)$, respectively $(a,b]$, represents the set of real numbers $x$ such that $a<x< b$, respectively $a<x\le b$. Given two real numbers $a,b$ and a positive real number $\delta$, we say that $a$ is a $\delta$-additive approximation of $b$ if $|a-b|\le \delta$, and say that $a$ is a $\delta$-multiplicative approximation of $b$ if $|a-b|\le \delta|a-b|$.

We will generally work in the vector space $\Real^n$ for some positive integer $n$ (representing the number of vertices of the graph considered). Given a vector $v\in\Real^n$, we write $\norm{v}$ its Euclidean norm, and $v^t$ its transpose. Given two vectors $u,v\in\Real^n$, we write $\braket{u,v}$ their inner product. Given any $s\in\{1,\ldots,n\}$, we use $e_s$ to denote the vector in $\Real^n$ with $s$-th coordinate 1 and all other coordinates zero.

\noindent{\bf The Laplacian and its eigenvalues.} We now define the Laplacian of a graph and introduce more specific notations used through this paper. It will be more convenient for us to work with the normalized version of the Laplacian of a graph, due to its natural connections with random walks. We refer to~\cite{Chung97} for details on the normalized Laplacian and all the notions below.

Let $G=(V,E)$ be an undirected weighted graph (possibly with loops) on $n$ vertices, i.e., a graph with a weight function $w\colon V\times V\to \Real$ satisfying $w(i,j)=w(j,i)$ for all pairs of vertices $(i,j)\in V\times V$, $w(i,j)= 0$ if $(i,j)\notin E$ and $w(i,j) > 0$ if $(i,j)\in E$. The degree of a vertex $i\in V$, denoted $d_i$, is defined as 
$
d_i=\sum_{j\in V} w(i,j).
$
The degree matrix of the graph $G$ is the $n\times n$ diagonal matrix $D_G=\diag(d_1,\ldots,d_n)$.
The volume of $G$ is defined as 
$
\vol(G)=\sum_{\ell=1}^n d_\ell.
$
Finally, if the graph has no isolated vertex, i.e., the degree of each vertex is positive, let us define the  quantity
\begin{equation}\label{eq:defd}
\dmaxmin=\frac{\max_{i\in\{1,\ldots,n\}}d_i}{\min_{i\in\{1,\ldots,n\}}d_i}. 
\end{equation}

The (normalized) Laplacian of the graph $G$ is the $n\times n$ symmetric matrix $L_G$ such that
\[
L_G[i,j]=\left\{\begin{array}{ll}
1 -\frac{w(i,i)}{d_i}&\textrm{ if } i=j \textrm{ and }d_i\neq 0,\\
-\frac{w(i,j)}{\sqrt{d_id_j}} &\textrm{ if } (i,j)\in E,\\
0 &\textrm{ otherwise},\\
\end{array}
\right.
\]
for all $i,j\in\{1,\ldots,n\}$. Let $\lambda_1\le \lambda_2 \cdots\le \lambda_n$ denote its eigenvalues. It is known that $\lambda_1\ge 0$ and $\lambda_n\le 2$. Moreover, $\lambda_2\neq 0$ if and only if $G$ is a connected graph. We have the general upper bound $\lambda_2\le n/(n-1)$. When $G$ is connected, we obtain the lower bound 
\begin{equation}\label{eq:LBl2}
\lambda_2\ge \frac{1}{\diam(G)\vol(G)},
\end{equation}
 where $\diam(G)$ denotes the diameter of $G$ (the maximal distance between two vertices in $G$). 
Let $\{u_1,\ldots,u_n\}$ be an orthonormal basis for $L_G$, where $u_i$ is an eigenvector associated with eigenvalue $\lambda_i$. We can thus write
$
L_G=\sum_{i=1}^n \lambda_i u_iu_i^t.
$
It is easy to check that
$
u_1=\frac{1}{\sqrt{\vol(G)}}\left(\sqrt{d_1},\ldots,\sqrt{d_n}\right)^t
.
$

\noindent{\bf Image of $L_G$ and its pseudo inverse.}
We define $\ImL$ as the image of $L_G$, that is, the linear span of the vectors $u_{c}$, $u_{c+1},\ldots, u_n$, where $c$ is the smallest integer such that $\lambda_c\neq 0$. The pseudo inverse (also called the Moore-Penrose pseudo inverse) of~$L_G$, denoted $L_G^\dagger$, is the matrix
\vspace{-2mm}
\[
L_G^\dagger=\sum_{i=c}^n \lambda_i^{-1} u_iu_i^t,
\]
\vspace{-3mm}

\noindent 
which corresponds to the inverse of $L_G$ when restricted to the subspace $\ImL$. Note that if the graph $G$ is connected then $c=2$. In that case $\ImL$ is simply the subspace of $\Real^n$ consisting of all vectors orthogonal to $u_1$.

\noindent{\bf The transition matrix and random walks.}
Another useful matrix is the transition matrix of $G$, which represents one step of a random walk on the graph $G$. Assume that the graph has no isolated vertex. Then the transition matrix of $G$ is the $n\times n$ matrix $P_G$ defined as
\[
P_G[i,j]=\left\{\begin{array}{ll}
w(i,j)/d_i &\textrm{ if } (i,j)\in E,\\
0 &\textrm{ otherwise}\\
\end{array}
\right.
\]
for all $i,j\in\{1,\ldots,n\}$.
Note that
$
L_G = I - D_G^{1/2}P_G D_G^{-1/2}.
$


\noindent{\bf Laplacian systems.}
Given a Laplacian $L_G$ of an undirected weighted graph $G$ and a vector $b\in\ImL$, the corresponding Laplacian system is the equation
$
L_Gx=b.
$
The goal is to solve this equation, i.e., compute $L_G^\dagger b$
or to compute an (additive or approximative) approximation of it. 
A standard assumption is that $G$ is connected since one can always reduce to this case dealing individually with each connected component (note that this reduction is space-efficient since one can identify the connected components of a graph in logarithmic space). For this reason, in Sections \ref{sec:main} and \ref{sec:l2} of this paper we will assume that $G$ is connected. For convenience, and without loss of generality, we will also assume that $b$ is a unit-norm vector.

\noindent{\bf Remark on numerical precision.}
The algorithms we present in this paper perform arithmetic operations on real numbers. When working with real numbers numerical precision is always a delicate issue, and especially in a space-bounded setting. To illustrate this point, let us consider as a simple example the multiplication of an integer~$a$ by the irrational numbers $\sqrt{2}$.
One of the most rigorous ways to proceed is to work in the bit complexity model, and compute only a fixed number of bits of $a\sqrt{2}$, which necessarily introduce a (small) approximation error. Analyzing rigorously a complicated algorithm in this model is typically extremely tedious, since the approximation errors of essentially all the arithmetic steps of the algorithm have to been considered. In this paper we (implicitly) use a slightly more abstract, but still fairly standard, model where basic arithmetic operations involving ``reasonable'' numbers can be implemented exactly. More precisely, we assume that the standard arithmetic operations (addition, substraction, multiplication, division) on $O(\log m)$-bit reals numbers (i.e., real numbers of absolute value between $m^{-c}$ and $m^{c'}$ for some constant $c$ and $c'$) can be done exactly in $O(\log m)$ space. This model enables us to focus on the most interesting and important algorithmic aspects of our approach, without having to deal with a multitude of minor technical details. Note that all the statements of the technical results of Sections \ref{sec:main} and \ref{sec:l2} hold with only minor modifications in the bit complexity model as well. The statements of Theorems \ref{th:main1} and \ref{th:main2} in the introduction, described in the more convenient setting of graphs with polynomially bounded weighted, hold without any modification even in the bit complexity model.
  
\section{Space-efficient Laplacian Solver}\label{sec:main}
In this section we describe how to compute in logarithmic space an approximate solution of the Laplacian system $L_Gx=b$, 
and prove Theorem \ref{th:main1}. We assume that $G$ is connected.

The theoretical foundation of our algorithm comes from the formula of the following theorem, which shows that the solution $L_G^\dagger b$ to the Laplacian system can be approximated using powers of the transition matrix $P_G$ of the graph.
\begin{theorem}\label{th:formula}
Let $\epsilon$ be any positive real number and $\lambda\in(0,2]$ be a lower bound on all the non-zero eigenvalues of $L_G$ (i.e., a lower bound on $\lambda_2$). For any positive integers $T\ge \frac{\log(6/(\epsilon\lambda))}{\lambda}$, $N\ge \frac{6T}{\epsilon}$ and $K\ge \max(6T,\log(6T/\epsilon))$, the inequality 
\[
\norm{\Big(L_G^\dagger-\frac{T}{N}\sum_{j=1}^{N}\sum_{k=0}^{K-1} 
\Poisson_{jT/N}(k)D_G^{1/2}P_G^kD_G^{-1/2}\Big)b}\le \frac{\epsilon\norm{b}}{2}
\]
holds for any vector $b\in \ImL$.
\end{theorem}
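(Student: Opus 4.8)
The plan is to diagonalize everything in the orthonormal eigenbasis $\{u_1,\dots,u_n\}$ of $L_G$ and reduce the matrix inequality to a scalar estimate on each eigenspace. Write $b=\sum_{i=2}^n \beta_i u_i$ (the component along $u_1$ vanishes since $b\in\ImL$ and $G$ is connected), so that $L_G^\dagger b=\sum_{i=2}^n \lambda_i^{-1}\beta_i u_i$. The key observation is that $D_G^{1/2}P_G D_G^{-1/2}=I-L_G$, hence $D_G^{1/2}P_G^k D_G^{-1/2}=(I-L_G)^k=\sum_{i=1}^n (1-\lambda_i)^k u_iu_i^t$. Therefore the operator in the theorem acts on each $u_i$ with $i\ge 2$ by the scalar
\[
f(\lambda_i)\;=\;\frac{T}{N}\sum_{j=1}^{N}\sum_{k=0}^{K-1}\Poisson_{jT/N}(k)\,(1-\lambda_i)^k,
\]
and by orthonormality the squared norm in the statement is $\sum_{i=2}^n \beta_i^2\,(\lambda_i^{-1}-f(\lambda_i))^2$. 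Since $\sum_i\beta_i^2=\norm{b}^2$, it suffices to prove the pointwise bound $|\lambda^{-1}-f(\lambda)|\le \epsilon/2$ for every eigenvalue $\lambda=\lambda_i\in[\lambda,2]$ (using $\lambda\le\lambda_2$ as the lower bound and $\lambda_n\le 2$).

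Next I would analyze $f(\lambda)$ by building it up from the exact identity $\lambda^{-1}=\int_0^\infty e^{-s\lambda}\,ds$. The inner sum $\sum_{k\ge 0}\Poisson_{s}(k)(1-\lambda)^k$ is exactly the probability generating function of a Poisson$(s)$ variable evaluated at $1-\lambda$, which equals $e^{-s\lambda}$; so $f$ is a Riemann-sum-with-truncation approximation to $\int_0^T e^{-s\lambda}\,ds$, itself an approximation to $\int_0^\infty e^{-s\lambda}\,ds=\lambda^{-1}$. I would then split the total error $|\lambda^{-1}-f(\lambda)|$ into three pieces, each to be bounded by $\epsilon/6$ (matching the $6$'s in the hypotheses): (i) the tail $\int_T^\infty e^{-s\lambda}\,ds=e^{-T\lambda}/\lambda$, which is at most $\epsilon/6$ precisely when $T\ge\log(6/(\epsilon\lambda))/\lambda$; (ii) the Riemann-sum discretization error $|\int_0^T e^{-s\lambda}\,ds-\frac{T}{N}\sum_{j=1}^N e^{-(jT/N)\lambda}|$, controlled by $N\ge 6T/\epsilon$ via the standard bound on a monotone function (the error is at most $\frac{T}{N}\cdot|1-e^{-T\lambda}|\le T/N$); and (iii) the truncation of the inner Poisson series at $k=K$, i.e.\ $\sum_{j=1}^N\frac{T}{N}\sum_{k\ge K}\Poisson_{jT/N}(k)(1-\lambda)^k$, which since $|1-\lambda|\le 1$ and $jT/N\le T$ is bounded by $\sup_{s\le T}\Pr[\Poisson_s\ge K]$ times $T$; a Chernoff/tail bound for the Poisson distribution shows $\Pr[\Poisson_s\ge K]$ is tiny once $K\ge\max(6T,\log(6T/\epsilon))$, making this term $\le\epsilon/6$ as well. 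Summing, $|\lambda^{-1}-f(\lambda)|\le\epsilon/2$, and feeding this back into the eigenbasis sum yields the claimed bound.

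The main obstacle I anticipate is item (iii): getting a clean, hypothesis-matching bound on the Poisson upper tail $\Pr[\Poisson_s\ge K]$ uniformly over $s\in(0,T]$, and correctly accounting for the fact that the truncated terms are multiplied by $(1-\lambda)^k$, which can be negative but has absolute value at most $1$ for $\lambda\in(0,2]$ — so the triangle inequality alone suffices and no cancellation needs to be exploited. The Riemann-sum step (ii) is routine since $s\mapsto e^{-s\lambda}$ is monotone decreasing, so the sum $\frac{T}{N}\sum_{j=1}^N e^{-(jT/N)\lambda}$ is a lower estimate and the gap telescopes. One should also double-check the direction of the inequality in (ii): using right endpoints undershoots the integral, and the deficit is exactly $\frac{T}{N}(1-e^{-T\lambda})\le T/N\le\epsilon/6$. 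Assembling these three $\epsilon/6$ bounds and the reduction to the scalar problem completes the proof.
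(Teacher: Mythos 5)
Your proposal is correct and follows essentially the same route as the paper's proof: diagonalize in the eigenbasis via $D_G^{1/2}P_G^kD_G^{-1/2}=(I-L_G)^k$, reduce to the scalar bound $|\lambda_i^{-1}-f(\lambda_i)|\le\epsilon/2$, and split the error into the integral tail, the right Riemann-sum discretization, and the Poisson-series truncation, each controlled to $\epsilon/6$ by the corresponding hypothesis on $T$, $N$, $K$. The paper fills in your anticipated obstacle (iii) exactly as you suggest, using $|1-\lambda_i|\le 1$, a Poisson tail bound, and Stirling to get $\sum_{k\ge K}\Poisson_s(k)\le e^{-K}$ for $K\ge 6s$.
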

\begin{proof}
Observe that
$
D_G^{1/2}P_G^k D_G^{-1/2} = \sum_{i=1}^n(1-\lambda_i)^k u_i^tu_i
$
for any integer $k\ge 0$.
We show below that the inequality
\[
\left|
\frac{1}{\lambda_i}-\frac{T}{N}\sum_{j=1}^{N}\sum_{k=0}^{K-1} 
\Poisson_{jT/N}(k)(1-\lambda_i)^k 
\right|
\le \epsilon/2,
\]
holds
for all $i\in\{2,\ldots,n\}$, which will imply the statement in the theorem.

For any $i\in\{2,\ldots,n\}$, first observe that 
$
\frac{1}{\lambda_i}=\int_0^\infty e^{-t\lambda_i}dt.
$
For any $T\ge 0$, we thus have
\begin{equation}\label{eq1}
\left|\frac{1}{\lambda_i}-\int_0^T e^{-t\lambda_i}dt\right|\le \int_T^\infty e^{-t\lambda_i}dt=\frac{e^{-T\lambda_i}}{\lambda_i}.
\end{equation}
Let us approximate the integral by a right Riemann sum: for any integer $N\ge 1$,
\begin{equation}\label{eq2}
\left|\int_0^T e^{-t\lambda_i}dt - \frac{T}{N}\sum_{j=1}^N e^{-j\frac{T}{N}\lambda_i}\right|
\le 
\frac{T}{N} \times (1-e^{-T\lambda_i})\le \frac{T}{N}.
\end{equation}

For any $s\ge 0$ we have 
\begin{align*}
e^{-s\lambda_i}=e^{-s}e^{(1-\lambda_i)s}
=e^{-s}\sum_{k=0}^{\infty}\frac{((1-\lambda_i)s)^k}{k!}
=\sum_{k=0}^{\infty}\Poisson_s(k)(1-\lambda_i)^k.
\end{align*}
For any integer $K\ge 1$ we thus have
\begin{align*}
\left|e^{-s\lambda_i}-\sum_{k=0}^{K-1} \Poisson_s(k)(1-\lambda_i)^k\right|
\le\sum_{k=K}^\infty \Poisson_s(k)
\le 
\left(1-\frac{s}{K+1}\right)^{-1}\Poisson_s(K)
\le \left(1-\frac{s}{K+1}\right)^{-1}\frac{s^Ke^{-s}}{\sqrt{2\pi K}K^K e^{-K}}.
\end{align*}
The first inequality uses $\lambda_i\in (0,2]$.
The second inequality uses a well-known upper bound on the tail probability of the Poisson distribution \cite{Glynn87}. The third inequality uses a standard lower bound on $K!$ from Stirling approximation. For any positive integer $K\ge e^2 s$, where $e=2.718\ldots$ is Euler's number, and in particular for any $K\ge 6 s$,  we get 
\begin{equation}\label{eq3}
\left|e^{-s\lambda_i}-\sum_{k=0}^{K-1} \Poisson_s(k)(1-\lambda_i)^k\right|
\le \left(1-\frac{1}{e^2}\right)^{-1}\frac{e^{-s}}{\sqrt{2\pi K}e^K}\le 
\frac{1}{e^K}.
\end{equation}

Combining Inequalities (\ref{eq1}), (\ref{eq2}) and (\ref{eq3}), for any positive integer $K\ge 6 T$ we obtain:
\begin{align*}
\left|\frac{1}{\lambda_i}-\frac{T}{N}\sum_{j=1}^{N}\sum_{k=0}^{K-1} 
\Poisson_{jT/N}(k)(1-\lambda_i)^k 
\right|
&\le \frac{e^{-T\lambda_i}}{\lambda_i}+\frac{T}{N}+\frac{T}{e^K}.
\end{align*}
Taking values of $T,N$ and $K$ as in the statement of the theorem guarantees that the right side of the above inequality is at most $\epsilon/2$.
\end{proof}

We now present a lemma that shows how to approximate space-efficiently the Poisson distribution appearing in the formula of Theorem \ref{th:formula}. This is done by using the well-known property that the Poisson distribution can be expressed as the limit distribution of binomial random variables, and showing that the convergence is fast enough.
\begin{lemma}\label{lemma:Poisson}
For any $\delta\in(0,1)$ and any $\zeta>0$, there exists a $O(\log{(ks/\delta)+\log\log(1/\zeta)})$-space algorithm that outputs a $\delta$-additive approximation of $\Poisson_{s}(k)$ with probability at least $1-\zeta$.
\end{lemma}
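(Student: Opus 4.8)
The plan is to express $\Poisson_s(k)$ as a limit of binomial probabilities and show that a modest value of the binomial parameter $m$ already gives a $\delta/2$-additive approximation, and then to estimate that binomial probability by Monte Carlo sampling. Concretely, recall that if $X$ is $\mathrm{Bin}(m, s/m)$ then $\Pr[X=k] = \binom{m}{k}(s/m)^k(1-s/m)^{m-k} \to \Poisson_s(k)$ as $m\to\infty$. First I would quantify the convergence rate: a standard computation (comparing $\binom{m}{k}(s/m)^k$ with $s^k/k!$ term by term, and $(1-s/m)^{m-k}$ with $e^{-s}$) shows that the error is $O(k s \cdot s^{k}e^{-s}/(m\cdot k!)) = O(ks/m)$ in absolute terms, since $s^k e^{-s}/k! = \Poisson_s(k) \le 1$. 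Hence taking $m = \Theta(ks/\delta)$ forces $|\Pr[X=k] - \Poisson_s(k)| \le \delta/2$. Note $m$ is then $\poly(ks/\delta)$, so $\log m = O(\log(ks/\delta))$ — this is the whole point of the exercise, since a single sample of $\mathrm{Bin}(m, s/m)$ can be drawn by flipping $m$ biased coins and counting successes, using only a counter of $O(\log m) = O(\log(ks/\delta))$ bits.

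Next I would handle the estimation of $p := \Pr[X=k]$. Draw $M$ independent samples $X_1,\dots,X_M$ of $\mathrm{Bin}(m,s/m)$, and output $\hat p = \frac{1}{M}\sum_{\ell=1}^M [X_\ell = k]$. By a Chernoff/Hoeffding bound, $\Pr[\,|\hat p - p| > \delta/2\,] \le 2\exp(-M\delta^2/2)$, so $M = O(\log(1/\zeta)/\delta^2)$ samples suffice to make the failure probability at most $\zeta$. Combining the two approximation steps by the triangle inequality, $|\hat p - \Poisson_s(k)| \le \delta$ with probability at least $1-\zeta$. For the space bound: we need a counter to loop over the $M$ samples, which costs $O(\log M) = O(\log(1/\delta) + \log\log(1/\zeta))$ bits; within each sample we need a loop counter over the $m$ coin flips and a success counter, costing $O(\log m) = O(\log(ks/\delta))$ bits; and an accumulator for $\hat p$, which is a rational with $O(\log M)$-bit numerator and denominator. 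All of this fits in $O(\log(ks/\delta) + \log\log(1/\zeta))$ space, matching the claim. Each biased coin flip with probability $s/m$ is itself done in $O(\log m)$ space under the arithmetic model assumed in the Preliminaries (we can compare a uniform $O(\log m)$-bit random string against the threshold).

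The main obstacle is the first step: proving that the binomial-to-Poisson convergence is fast enough that $m = \poly(ks/\delta)$ suffices, with a clean additive (not relative) error bound. One must be careful because $\Poisson_s(k)$ itself can be exponentially small in $k$ and $s$, but since we only ask for a \emph{$\delta$-additive} approximation this is fine — we are allowed to simply return a value close to $0$ when the true probability is tiny, and indeed the error estimate $O(ks/m)$ is additive and does not degrade in that regime. The one place requiring genuine care is bounding $(1-s/m)^{m-k}$ against $e^{-s}$ and the product of the falling-factorial ratios $\prod_{i=0}^{k-1}(1 - i/m)$ against $1$; each contributes a multiplicative factor $1 \pm O(ks/m)$ (valid once $m \gtrsim ks$), and multiplying through by $\Poisson_s(k)\le 1$ turns this into the desired additive bound. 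Everything else — the Chernoff bound, the space accounting — is routine.
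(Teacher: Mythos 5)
Your proposal is correct and follows essentially the same route as the paper: approximate $\Poisson_s(k)$ by $\Pr[\mathrm{Bin}(m,s/m)=k]$ with $m=\poly(k,s,1/\delta)$, and then estimate that binomial probability by Monte Carlo sampling with a Chernoff bound, giving the stated space bound. The only (immaterial) quantitative difference is that the paper takes $m\ge 2(k^2+s^2)/\delta$, and indeed the correct additive error is $O((k^2+s^2)/m)$ rather than $O(ks/m)$ --- the comparison of $(1-s/m)^{m}$ with $e^{-s}$ alone contributes $O(s^2/m)$ even when $k$ is small --- but this does not affect the $O(\log(ks/\delta)+\log\log(1/\zeta))$ space complexity.
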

\begin{proof}
Let $n$ be any integer such that $n\ge 2(k^2+s^2)/\delta$.
Let $X_n$ be the binomial random variable with parameters $n$ and $s/n$, i.e., $X_n$ is the number of successes in $n$ repeated trials of a binomial experiment with success probability $s/n$. Standard computations (see, e.g., page 99 of~\cite{Mitzenmacher+05}) show that
\[
\Poisson_{s}(k)(1-k/n)^k\left(1-\frac{s^2}{n}\right)
\le
\Pr[X_n=k]\le 
\Poisson_{s}(k)\left(\frac{1}{1-s k/n}\right).
\]
Using the inequality $1/(1-x)\le 1+2x$ valid for any $x\in(0,1/2)$ and the inequality $(1-x)^k\ge 1-kx$ valid for any $k\ge 0$, we get
\[
\Poisson_{s}(k)\left(1-\frac{k^2+s^2}{n}\right)
\le
\Pr[X_n=k]\le 
\Poisson_{s}(k)\left(1+\frac{2s k}{n}\right).
\]
With our value of $n$, and since $\Poisson_{s}(k)\le 1$, we obtain
\begin{equation}\label{ineq:Poisson}
\left|\Poisson_s(k)-\Pr[X_n=k]\right|\le \delta/2.
\end{equation}

Our algorithm for estimating $\Poisson_s(k)$ is as follows. We will use the value $n=\ceil{2(k^2+s^2)/\delta}$. The algorithm creates a counter $C$ initialized to zero, and repeat $m=\ceil{\frac{2\log(2/\zeta)}{\delta^2}}$ times the following: perform $n$ trials of a binomial experiment with success probability $s/n$ and increment~$C$ by one if exactly $k$ among these $n$ trials succeeded. The algorithm finally outputs $C/m$. 

The expected value of the output of this algorithm is precisely the quantity $\Pr[X_n=k]$ considered above. From Chernoff bound, the output of the algorithm is thus a $\delta/2$-additive approximation of this quantity with probability at least $1-2e^{-m\delta^2/2}\ge 1-\zeta$. Combining this with Inequality (\ref{ineq:Poisson}) and the triangular inequality, we conclude that the output of the algorithm is a $\delta$-additive approximation of $\Poisson_s(k)$ with the same probability.

Finally, observe that the space complexity of the algorithm is linear in $\log(1/\delta)$, $\log k$, $\log s$ and $\log\log (1/\zeta)$. 
\end{proof}

We are now ready to present our algorithm for estimating $L_G^\dagger b$. The algorithm, denoted Algorithm $\mathcal{A}$, is described in Figure \ref{fig:algorithmA} and analyzed in the following theorem.
\begin{figure}[ht!]
\begin{center}
\fbox{
\begin{minipage}{15 cm} \vspace{-4mm}
\begin{codebox}
\zi
Input: an undirected, weighted and connected graph $G$ on $n$ vertices, 
\zi
\hspace{11mm}
an integer $i\in\{1,\ldots,n\}$, a unit vector $b\in \ImL$
\zi
\hspace{11mm}  a precision parameter $\epsilon\in(0,1]$, an error parameter $\gamma\in(0,1]$
\zi
\li
$T\leftarrow\ceil{\frac{\log(6/(\epsilon\lambda))}{\lambda}}$;
$N\leftarrow \ceil{\frac{6T}{\epsilon}}$; $K\leftarrow\ceil{\max(6T,\log(\frac{6T}{\epsilon}))}$;
\li
$\delta\leftarrow\left(6TK \sqrt{\sum_{\ell=1}^n\frac{{d_{\ell}}}{{d_i}}}\right)^{-1}$; $\zeta\leftarrow \frac{\gamma}{NK(1+n)}$; $r\leftarrow \ceil{\frac{\log(2/\zeta)}{2\delta^2}}$;
\li
$R \leftarrow 0$;
\li
\For $j$ from 1 to $N$ \Do
\li
\For $k$ from 0 to $K-1$ \Do
\li
Compute an approximation $a$ of $\Poisson_{jT/N}(k)$ using Lemma \ref{lemma:Poisson} with $\delta$ and $\zeta$;
\li
\For $\ell$ from 1 to n \Do
\li
$S\leftarrow 0$;
\li
\Repeat\hspace{3mm} $r$ times:
\li
Run the walk $P_G$ starting on vertex $\ell$ for $k$ steps;
\li
If the walk ends on vertex $i$ then
$S\leftarrow S+1$;
\End
\li
$R\leftarrow R+\frac{ab_\ell S}{r}\times \sqrt{\frac{d_{\ell}}{d_i}}$;
\End
\End
\End
\li
Output $RT/N$.
\end{codebox}
\end{minipage}
}
\end{center} 
\caption{Algorithm $\mathcal{A}$ computing an $\epsilon$-additive approximation of the $i$-th entry of $L_G^\dagger b$ with probability at least $1-\gamma$.}\label{fig:algorithmA}
\end{figure}
\begin{theorem}\label{th:algA}
Let $G$ be an undirected weighted connected graph and
$\lambda\in(0,2]$ be a lower bound on the second smallest zero eigenvalue $\lambda_2$ of $G$.
Algorithm~$\mathcal{A}$ outputs an $\epsilon$-additive approximation of the $i$-th entry of $L_G^\dagger b$ with probability at least $1-\gamma$, and uses $O(\log\big(\frac{n\dmaxmin}{\epsilon\lambda}\big)+\log\log(1/\gamma))$ space.
\end{theorem}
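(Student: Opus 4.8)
The plan is to establish correctness and the space bound separately, treating Algorithm $\mathcal{A}$ as a two-stage approximation of the formula from Theorem~\ref{th:formula}. For correctness, I would start from the observation that the exact quantity
\[
\frac{T}{N}\sum_{j=1}^{N}\sum_{k=0}^{K-1}\Poisson_{jT/N}(k)\,D_G^{1/2}P_G^kD_G^{-1/2}b,
\]
evaluated in its $i$-th coordinate, equals $\frac{T}{N}\sum_{j,k}\Poisson_{jT/N}(k)\sum_{\ell}\sqrt{d_\ell/d_i}\,P_G^k[\ell,i]\,b_\ell$, since $(D_G^{1/2}P_G^kD_G^{-1/2})[i,\ell]=\sqrt{d_i/d_\ell}\,P_G^k[i,\ell]$ and, crucially, $\sqrt{d_i/d_\ell}\,P_G^k[i,\ell]=\sqrt{d_\ell/d_i}\,P_G^k[\ell,i]$ by the reversibility (detailed-balance) identity $d_i P_G^k[i,\ell]=d_\ell P_G^k[\ell,i]$. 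By Theorem~\ref{th:formula} with the stated choices of $T,N,K$, the $i$-th coordinate of this exact expression is within $\epsilon/2$ of the $i$-th entry of $L_G^\dagger b$ (here I use $\norm{b}=1$ and that $|v_i|\le\norm v$). So it remains to bound the error introduced by replacing (a) each $\Poisson_{jT/N}(k)$ by the estimate $a$ from Lemma~\ref{lemma:Poisson}, and (b) each transition probability $P_G^k[\ell,i]$ by the empirical frequency $S/r$ from $r$ independent $k$-step random walks started at $\ell$.

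For the error analysis I would argue term by term. Each of the $NK$ calls to Lemma~\ref{lemma:Poisson} succeeds (gives a $\delta$-additive approximation) except with probability $\zeta$, and each of the $NKn$ blocks of $r$ random walks produces, by a Chernoff/Hoeffding bound with $r=\ceil{\log(2/\zeta)/(2\delta^2)}$, a $\delta$-additive approximation of $P_G^k[\ell,i]$ except with probability $\zeta$; by a union bound over all $NK(1+n)$ events and the choice $\zeta=\gamma/(NK(1+n))$, all estimates are good simultaneously with probability at least $1-\gamma$. Conditioned on that, I would expand $ab_\ell(S/r)=(\Poisson+\Delta_1)(b_\ell)(P_G^k[\ell,i]+\Delta_2)$ with $|\Delta_1|,|\Delta_2|\le\delta$, and sum the contribution of the cross terms. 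Using $|b_\ell|\le 1$, $\Poisson_{jT/N}(k)\le 1$, $P_G^k[\ell,i]\le 1$, $\delta\le 1$, and the factor $\sqrt{d_\ell/d_i}$, the total perturbation of $R$ is bounded by roughly $3\delta\cdot NK\cdot\sum_\ell\sqrt{d_\ell/d_i}$ (the $3$ absorbing the three small cross terms $\Delta_1 P, \Poisson\Delta_2, \Delta_1\Delta_2$), hence the perturbation of the output $RT/N$ is at most $3\delta\cdot TK\cdot\sum_\ell\sqrt{d_\ell/d_i}$, which the choice $\delta=(6TK\sqrt{\sum_\ell d_\ell/d_i})^{-1}$ forces to be $\le\epsilon/2$. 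Combining with the $\epsilon/2$ from Theorem~\ref{th:formula} via the triangle inequality gives the claimed $\epsilon$-additive approximation with probability at least $1-\gamma$.

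For the space bound, I would note that Algorithm $\mathcal{A}$ is three nested loops with counters $j\le N$, $k<K$, $\ell\le n$, each storable in $O(\log(Nn))$ bits, plus the inner repeat-counter and $S$, both $O(\log r)$ bits. A single $k$-step random walk on $G$ needs only the current vertex, so $O(\log n)$ space, reusing the same space across the $r$ repetitions and across all $(j,k,\ell)$. Each call to Lemma~\ref{lemma:Poisson} costs $O(\log(KTN/\delta)+\log\log(1/\zeta))$ space, again reusable. The accumulators $R$ and $S/r$ and the arithmetic with weights $\sqrt{d_\ell/d_i}$ involve $O(\log m)$-bit reals under the stated numerical-precision model. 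It then suffices to bound $\log$ of each relevant parameter by $O(\log(n\dmaxmin/(\epsilon\lambda)))$: we have $\log T=O(\log(1/\lambda)+\log\log(1/(\epsilon\lambda)))$, $\log N,\log K=O(\log(T/\epsilon))$, $\log(1/\delta)=O(\log T+\log K+\log\sum_\ell\sqrt{d_\ell/d_i})$ where $\sum_\ell\sqrt{d_\ell/d_i}\le n\sqrt{\dmaxmin}$, and $\log(1/\zeta)=O(\log(NK n/\gamma))$ so $\log\log(1/\zeta)=O(\log\log(n\dmaxmin/(\epsilon\lambda))+\log\log(1/\gamma))$. Summing, every stored quantity fits in $O(\log(n\dmaxmin/(\epsilon\lambda))+\log\log(1/\gamma))$ space.

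I expect the main obstacle to be the bookkeeping in step two: getting the constants to line up so that the $\delta$-perturbations, after being multiplied by $NK$ and by $\sum_\ell\sqrt{d_\ell/d_i}$ and scaled by $T/N$, come out to exactly $\epsilon/2$ rather than some larger constant times $\epsilon$. The slightly delicate points are (i) making sure the cross-term count is a true constant (three terms, each bounded cleanly because all of $\Poisson$, $P_G^k[\ell,i]$, $|b_\ell|$, and $\delta$ are $\le 1$), and (ii) handling the rounding in the ceilings for $T,N,K,r$ so the inequalities of Theorem~\ref{th:formula} and of Lemma~\ref{lemma:Poisson} still apply — this is routine since rounding up only helps. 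The reversibility identity used to rewrite the $i$-th coordinate in terms of walks \emph{ending} at $i$ (rather than starting at $i$) is the one conceptual step worth stating explicitly.
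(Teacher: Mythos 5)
Your overall structure matches the paper's proof: split the error into the $\epsilon/2$ coming from Theorem~\ref{th:formula} (using $\norm b=1$ and that a coordinate is bounded by the norm) plus an $\epsilon/2$ sampling error; rewrite the $i$-th coordinate so that the walks run from $\ell$ to $i$ (your detailed-balance identity is equivalent to the paper's appeal to the symmetry of $L_G^\dagger$ and of $D_G^{1/2}P_G^kD_G^{-1/2}$); bound each term's perturbation by $3\delta$; union-bound over the $NK(1+n)$ estimates with $\zeta=\gamma/(NK(1+n))$; and account for space register by register. All of that is sound and is essentially the paper's argument.

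The one genuine gap is in the final error accounting, precisely the spot you flagged as the ``main obstacle.'' By replacing $|b_\ell|$ with $1$ before summing over $\ell$, you arrive at the perturbation bound $3\delta\, TK\sum_{\ell}\sqrt{d_\ell/d_i}$ and then assert that $\delta=\bigl(6TK\sqrt{\textstyle\sum_\ell d_\ell/d_i}\bigr)^{-1}$ forces this to be at most $\epsilon/2$. It does not: with that $\delta$ your bound equals $\tfrac12\sum_\ell\sqrt{d_\ell/d_i}\,\big/\sqrt{\sum_\ell d_\ell/d_i}$, which can be as large as $\sqrt n/2$ (e.g.\ when all degrees are equal). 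The algorithm's $\delta$ is calibrated against $\sqrt{\sum_\ell d_\ell/d_i}$, not against $\sum_\ell\sqrt{d_\ell/d_i}$, so you must retain the factor $|b_\ell|$ in the sum and only then invoke Cauchy--Schwarz together with $\norm b=1$, giving $\sum_\ell|b_\ell|\sqrt{d_\ell/d_i}\le\sqrt{\sum_\ell d_\ell/d_i}$. This is exactly the step the paper takes, and it is what makes the definition of $\delta$ do its job. (Even then, the displayed $\delta$ yields $3\delta TK\sqrt{\sum_\ell d_\ell/d_i}=1/2$ rather than $\epsilon/2$; a factor of $\epsilon$ is evidently intended in $\delta$. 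That is an issue with the paper's constants, not with your approach, but it shows why this bookkeeping step cannot be waved through with a ``roughly.'')
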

\begin{proof}
Remember that for any $s\in\{1,\ldots,n\}$, $e_s$ denotes the $1\times n$ vector with $s$-th coordinate~1 and all other coordinates zero. Note that the $i$-th coordinate of $L_G^\dagger b$, which is the quantity we want to estimate, is $\braket{e_i,b^t L_G^\dagger}$ since $L_G^\dagger$ is symmetric. From Theorem \ref{th:formula} and the triangular inequality, we know that any $\frac{\epsilon}{2}$-additive approximation of 
\begin{align*}
\frac{T}{N}\sum_{j=1}^{N}\sum_{k=0}^{K-1} 
\Poisson_{jT/N}(k)\braket{e_i,b^tD_G^{1/2}P_G^kD_G^{-1/2}}
=
\frac{T}{N}\sum_{j=1}^{N}\sum_{k=0}^{K-1} \sum_{\ell=1}^n
\Poisson_{jT/N}(k)b_\ell\sqrt{\frac{{d_\ell}}{{d_i}}}\braket{e_i,e_\ell P_G^k}
\end{align*}
is an $\epsilon$-additive approximation of $\braket{e_i,b^t L_G^\dagger}$. We show below that Algorithm $\mathcal{A}$ precisely outputs an $\epsilon/2$-additive approximation of this quantity.

Note that the probability of a walk $P_G$ starting on vertex $\ell$ reaches vertex $i$ after exactly $k$ steps is $\langle e_i,e_\ell P_G^k\rangle$, the $i$-th coordinate of $e_\ell P_G^k$. At the end of Steps 9-11 we thus have
\begin{equation}\label{eq:Chernoff}
\Pr\left[|S/r-\langle e_i,e_\ell P_G^k\rangle|\le \delta\right]\ge 1-2e^{-2r\delta^2}\ge 1-\zeta,
\end{equation}
from Chernoff bound.
Lemma \ref{lemma:Poisson} also shows that $a$ is a $\delta$-additive approximation of $\Poisson_{jT/N}(k)$ with probability at least $1-\zeta$. Let us continue our analysis under the assumption that all these approximations are correct (we discuss the overall success probability at the end of the proof). At Step 12 we thus have
\[
\left|
\frac{a S}{r} -
\Poisson_{jT/N}(k)\braket{e_i,e_\ell P_G^k}
\right|\le 
\left(\delta+\Poisson_{jT/N}(k)+\braket{e_i,e_\ell P_G^k}\right)\delta\le 3\delta.
\]
The output of the algorithm at Step 13 then satisfies
\begin{align*}
\left|
\frac{RT}{N} - \frac{T}{N}\sum_{j=1}^{N}\sum_{k=0}^{K-1} \sum_{\ell=1}^n
\Poisson_{jT/N}(k)b_\ell\sqrt{\frac{{d_\ell}}{{d_i}}}\braket{e_i,e_\ell P_G^k}
\right|
\le 
\frac{T}{N}\sum_{j=1}^{N}\sum_{k=0}^{K-1} \sum_{\ell=1}^n
3\delta |b_\ell|\sqrt{\frac{{d_\ell}}{{d_i}}}
&=
TK \sum_{\ell=1}^n
3\delta |b_\ell|\sqrt{\frac{{d_\ell}}{{d_i}}}\\
&\!\le
3\delta TK \sqrt{\sum_{\ell=1}^n\frac{{d_\ell}}{{d_i}}}
\le \epsilon/2.
\end{align*}

The space complexity of the algorithm is $O(\log(n\dmaxmin/(\lambda\epsilon))+\log\log(1/\gamma))$, from Lemma \ref{lemma:Poisson} and the observation that only registers of this size are needed to implement the algorithm. Finally, let us discuss the success probability of this algorithm. Errors can only occur at Steps 6 or 10. From Inequality (\ref{eq:Chernoff}) and Lemma \ref{lemma:Poisson}, and using the union bound, we know that the overall success probability is at least $1-\zeta NK (1+n)= 1-\gamma$. 
\end{proof}
Theorem \ref{th:algA} implies Theorem \ref{th:main1} by observing that $\dmaxmin/\lambda$ can be upper bounded by a polynomial in $n$ when the weights are polynomially bounded, as shown in Equations (\ref{eq:defd}) and (\ref{eq:LBl2}) of Section~\ref{sec:prelim} (if $G$ is not connected we can simply apply Theorem \ref{th:l2} on each connected component).

\section{Space-efficient Approximation of the Spectral Gap}\label{sec:l2}

Let us consider the following matrix:
\[
M_G=\frac{1}{2}\left(I+D_G^{1/2}P_GD_G^{-1/2}\right).
\]
Note that $M_G$ is a symmetric matrix. Its eigenvalues are
$
0 \le 1-\lambda_n/2 < \cdots < 1-\lambda_2/2 < 1-\lambda_1/2 = 1.
$
The eigenvectors of $M_G$ are the same as the eigenvectors of $L_G$. In particular, the eigenvector of $M_G$ corresponding to the eigenvalue $1$ is $u_1$, and the eigenvector of $M_G$ corresponding to the eigenvalue $1-\lambda_2/2$ is $u_2$.
A well-known approach for approximating the largest eigenvalue of a matrix is the power method (see, e.g., \cite{Vishnoi13}). Our idea is to apply this method on $M_G$ restricted to $\ImL$, and compute the ratio $\snorm{M_G^{k+1}v}/\snorm{M_G^{k}v}$ on a random vector $v\in \ImL$ --- it is easy to show that with high probability this ratio is close to $1-\lambda_2$ for large enough $k$. In this section we will develop a space-efficient version of this approach, and prove Theorem \ref{th:main2}

When using the power method to estimate $1-\lambda_2/2$, we need to use vectors orthogonal to $u_1$ that have a non-negliglible "overlap" with the eigenvector $u_2$. We say that a vector $v\in \Real^n$ is good if the following three conditions are satisfied: $\norm{v}=1$, $v\in \ImL$, and $|\langle v,u_2\rangle|\ge \frac{1}{\sqrt{2}n\dmaxmin}$. While a random unit-norm vector in $\ImL$ is a good vector with high probability, several technical difficulties arise when considering space-efficient vector sampling. Instead of using such probabilistic arguments, we introduce below a set $\Sigma$ of vectors that necessarily contains at least one good vector.  

Let $\Sigma\subset\Real^n$ be the set containing the $n(n-1)/2$ vectors defined as follows. Each of these vectors corresponds to choosing two distinct indexes $i,j\in\{1,\ldots,n\}$ and taking the $n$-dimensional vector with $i$-th coordinate $-\frac{1}{\sqrt{1+d_i/d_j}}$, $j$-th coordinate $\frac{1}{\sqrt{1+d_j/d_i}}$, and all other coordinates being zero. The following easy lemma shows that $\Sigma$ indeed contains at least one good vector.

\begin{lemma}\label{lemma:good}
There exists a good vector in $\Sigma$.
\end{lemma}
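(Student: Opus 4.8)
The plan is to exhibit a single vector in $\Sigma$ that is good, by choosing the two indices $i,j$ cleverly based on the eigenvector $u_2$. First I would recall the orthogonality constraint: a vector $v$ lies in $\ImL$ exactly when $\langle v,u_1\rangle=0$, and since $u_1=\frac{1}{\sqrt{\vol(G)}}(\sqrt{d_1},\ldots,\sqrt{d_n})^t$, this means $\sum_\ell \sqrt{d_\ell}\, v_\ell=0$. A quick check shows that every vector in $\Sigma$ satisfies this: the vector with nonzero coordinates only at positions $i,j$ contributes $\sqrt{d_i}\cdot\bigl(-\tfrac{1}{\sqrt{1+d_i/d_j}}\bigr)+\sqrt{d_j}\cdot\tfrac{1}{\sqrt{1+d_j/d_i}} = -\tfrac{\sqrt{d_i d_j}}{\sqrt{d_i+d_j}}+\tfrac{\sqrt{d_i d_j}}{\sqrt{d_i+d_j}}=0$. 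Likewise each such vector has unit norm, since the squares of its two nonzero entries are $\tfrac{d_j}{d_i+d_j}$ and $\tfrac{d_i}{d_i+d_j}$, which sum to $1$. So the only condition that needs work is the overlap condition $|\langle v,u_2\rangle|\ge \frac{1}{\sqrt{2}\,n\dmaxmin}$.

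For the overlap, I would pick $i$ to be an index maximizing $|(u_2)_i|$ and $j$ to be an index maximizing $|(u_2)_j|$ among indices of the \emph{opposite} sign (such a $j$ exists because $u_2\perp u_1$ forces $u_2$ to have both a strictly positive and a strictly negative coordinate — here one should note $u_1$ has all positive entries, so $\langle u_2,u_1\rangle=0$ rules out $u_2$ being entirely nonnegative or entirely nonpositive). Then $(u_2)_i$ and $(u_2)_j$ have opposite signs, and since $\|u_2\|=1$ we have $\max_\ell |(u_2)_\ell|\ge \frac{1}{\sqrt n}$, so at least one of $|(u_2)_i|,|(u_2)_j|$ is at least $\frac{1}{\sqrt n}$. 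Computing $\langle v,u_2\rangle = -\tfrac{(u_2)_i}{\sqrt{1+d_i/d_j}}+\tfrac{(u_2)_j}{\sqrt{1+d_j/d_i}}$: because the two signs are opposite, the two terms have the \emph{same} sign, so there is no cancellation and $|\langle v,u_2\rangle|$ is at least the larger of the two contributions in absolute value. Each coefficient $\tfrac{1}{\sqrt{1+d_i/d_j}}$ and $\tfrac{1}{\sqrt{1+d_j/d_i}}$ is bounded below by $\tfrac{1}{\sqrt{1+\dmaxmin}}\ge \tfrac{1}{\sqrt{2\dmaxmin}}$ (using $\dmaxmin\ge 1$), so $|\langle v,u_2\rangle|\ge \tfrac{1}{\sqrt{2\dmaxmin}}\cdot\tfrac{1}{\sqrt n}\ge \tfrac{1}{\sqrt{2}\,n\dmaxmin}$ since $\sqrt n\le n$ and $\sqrt{\dmaxmin}\le\dmaxmin$. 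This is exactly the required bound, so this $v\in\Sigma$ is good.

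The only subtlety — what I expect to be the one place needing a little care rather than a genuine obstacle — is handling signs and possible zero coordinates: if some coordinate of $u_2$ is zero it simply cannot be the maximizer among negative (or positive) coordinates, so the argument for the existence of an opposite-sign index, and the no-cancellation step, go through verbatim. One should also double-check the degenerate constant-factor inequalities ($\sqrt n\le n$, $\dmaxmin\ge 1$, $1+\dmaxmin\le 2\dmaxmin$) but these are trivial. I would write the proof in the order above: first verify norm and orthogonality for all of $\Sigma$, then choose $i,j$ from $u_2$, then do the opposite-sign / no-cancellation computation to get the overlap bound.
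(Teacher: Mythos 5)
Your proof is correct, and it shares the paper's overall strategy (exhibit a vector of $\Sigma$ supported on an opposite-sign pair of coordinates of $u_2$, and exploit the absence of cancellation), but the crucial quantitative step is done differently. The paper uses both the orthogonality $\langle u_2,u_1\rangle=0$ and the normalization $\norm{u_2}=1$ to show that the positive and negative parts of $\sum_\ell\sqrt{d_\ell}\,|x_\ell|$ are \emph{each} at least $\tfrac12\min_\ell\sqrt{d_\ell}$, which yields indices $i\in S^-$, $j\in S^+$ with \emph{both} $|x_i|$ and $|x_j|$ at least $\frac{1}{2n\sqrt{\dmaxmin}}$; it then sums the two contributions to $\langle v,u_2\rangle$. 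You instead take $i$ to be the global maximizer of $|(u_2)_\ell|$, which is at least $1/\sqrt n$ by normalization alone, use orthogonality only to guarantee that \emph{some} opposite-sign index $j$ exists, and lower-bound $|\langle v,u_2\rangle|$ by the single contribution from $i$. Your version is a bit more elementary (no mass-splitting argument over $S^+$ and $S^-$) and in fact gives the slightly stronger bound $\frac{1}{\sqrt{2n\dmaxmin}}$; the paper's route has the feature of controlling both coordinates, which your argument does not need. All the auxiliary steps you flag (unit norm and orthogonality of every vector of $\Sigma$, existence of an opposite-sign coordinate via positivity of $u_1$ on a connected graph, and the coefficient bound $\frac{1}{\sqrt{1+d_i/d_j}}\ge\frac{1}{\sqrt{2\dmaxmin}}$) check out.
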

\begin{proof}
Any vector in $\Sigma$ is a unit-norm vector orthogonal to $u_1$. We show below that there exists a vector in $\Sigma$ that also satisfies the third condition of the definition of good vectors. 

Let us write $u_2=(x_1,\ldots,x_n)$. Let $S^+\subseteq \{1,\ldots,n\}$ be the set of indices $\ell$ such that $x_\ell\ge 0$, and $S^-\subseteq \{1,\ldots,n\}$ be the set of indices $\ell$ such that $x_\ell< 0$. Since $u_2$ is orthogonal to $u_1$, we have 
$
\sum_{\ell\in S^+} \sqrt{d_\ell}|x_\ell| = \sum_{\ell\in S^-} \sqrt{d_\ell}|x_\ell|.
$ 
Since $u_2$ is a unit vector we have 
\[
1= \sum_{\ell\in S^+} x_\ell^2 + \sum_{\ell\in S^-} x_\ell^2 
\le
\sum_{\ell\in S^+} |x_\ell| + \sum_{\ell\in S^-} |x_\ell| . 
\]
We conclude that 
\[
\sum_{\ell\in S^+} \sqrt{d_\ell}|x_\ell| \ge \frac{\min_{\ell\in\{1,\ldots,n\}}\sqrt{d_{\ell}}}{2} \textrm{ and } \sum_{\ell\in S^-} \sqrt{d_\ell}|x_\ell| \ge \frac{\min_{\ell\in\{1,\ldots,n\}}\sqrt{d_{\ell}}}{2}
\]
which implies that there exist $i\in S^-$ and $j\in S^+$ such that 
$|x_{i}|$ and $|x_{j}|$ are at least $\frac{1}{2n\sqrt{\dmaxmin}}$. Let~$v$ be the vector in $\Sigma$ with $i$-th coordinate $-\frac{1}{\sqrt{1+d_i/d_j}}$ and $j$-th coordinate $\frac{1}{\sqrt{1+d_j/d_i}}$.
The inner product of $v$ and $u_2$ is thus at least 
\begin{align*}
\frac{1}{2n\sqrt{\dmaxmin}}\left(\frac{1}{\sqrt{1+d_i/d_j}}+\frac{1}{\sqrt{1+d_j/d_i}}\right)
&=
\frac{1}{2n\sqrt{\dmaxmin}}\left(\frac{d_j}{\sqrt{d_j+d_i}}+\frac{d_i}{\sqrt{d_i+d_j}}\right)
\ge \frac{1}{\sqrt{2}n\dmaxmin},
\end{align*}
as claimed.
\end{proof}

The following proposition is our version of the power method.
\begin{proposition}\label{proposition:power}
Let $\delta$ be any real number such that $0<\delta\le 1$, and $\zeta$ be any real number such that $0<\zeta\le \delta\lambda_2/12$.
\begin{itemize}
\item[(i)]
For any integer $k\ge 0$, any non-zero vector $v\in \ImL$ and any $\zeta$-multiplicative approximations $C_1$ and $C_2$ of $\snorm{M_G^{k}v}$ and $\snorm{M_G^{k+1}v}$, respectively, the inequality
$
(1-\delta)\lambda_2\le 2\left(1-C_2/C_1\right)
$ holds.
\item[(ii)]
Let $v$ be a good vector.
For any integer  
$
k\ge\frac{3\log(\sqrt{2}n\dmaxmin)}{\delta\lambda_2}-1
$
and any $\zeta$-multiplicative approximations $C_1$ and $C_2$ of $\snorm{M_G^{k}v}$ and $\snorm{M_G^{k+1}v}$, respectively, the inequality
$
2\left(1-C_2/C_1\right)\le (1+\delta)\lambda_2
$
holds
\end{itemize}
\end{proposition}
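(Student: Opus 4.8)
The plan is to expand $v$ in the orthonormal eigenbasis $\{u_1,\dots,u_n\}$ of $M_G$ and track how the power iteration amplifies the coefficients. Write $v=\sum_{i=2}^n \alpha_i u_i$ (the sum starts at $i=2$ because $v\in\ImL$ and $G$ is connected), and let $\mu_i=1-\lambda_i/2$ denote the eigenvalues of $M_G$, so $\mu_2>\mu_3>\cdots>\mu_n\ge 0$ and $\mu_2\in(0,1)$. Then $\snorm{M_G^k v}^2=\sum_{i=2}^n \alpha_i^2\mu_i^{2k}$, and the \emph{exact} Rayleigh-type ratio I want to control is $\rho_k:=\snorm{M_G^{k+1}v}^2/\snorm{M_G^k v}^2=\bigl(\sum_i\alpha_i^2\mu_i^{2k+2}\bigr)/\bigl(\sum_i\alpha_i^2\mu_i^{2k}\bigr)$, which is a weighted average of the $\mu_i^2$ and therefore satisfies $\mu_n^2\le\rho_k\le\mu_2^2$. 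Since $\snorm{M_G^{k+1}v}/\snorm{M_G^k v}=\sqrt{\rho_k}\le\mu_2=1-\lambda_2/2$, part (i) will follow once I account for the approximation error: if $C_1,C_2$ are $\zeta$-multiplicative approximations, then $C_2/C_1\le \frac{1+\zeta}{1-\zeta}\cdot\sqrt{\rho_k}\le\frac{1+\zeta}{1-\zeta}(1-\lambda_2/2)$, and plugging in $\zeta\le\delta\lambda_2/12$ and simplifying (using $\lambda_2\le 2$ so all quantities are $O(1)$) should yield $2(1-C_2/C_1)\ge(1-\delta)\lambda_2$.

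For part (ii) I need a matching upper bound on $\sqrt{\rho_k}$, which is where the ``good vector'' hypothesis and the lower bound on $k$ enter. Writing $\sqrt{\rho_k}$ in terms of $w:=1-\sqrt{\rho_k}$, I want to show $\sqrt{\rho_k}\ge\mu_2-(\text{something small})$, i.e., that the contribution of the lower eigenvalues $\mu_3,\dots,\mu_n$ is negligible after $k$ steps. The standard estimate is
\[
\rho_k=\mu_2^2\cdot\frac{\alpha_2^2+\sum_{i\ge3}\alpha_i^2(\mu_i/\mu_2)^{2k+2}}{\alpha_2^2+\sum_{i\ge3}\alpha_i^2(\mu_i/\mu_2)^{2k}}
\ge \mu_2^2\Bigl(1-\frac{\sum_{i\ge3}\alpha_i^2(\mu_i/\mu_2)^{2k}}{\alpha_2^2}\Bigr),
\]
and since $v$ is good we have $\alpha_2^2=\langle v,u_2\rangle^2\ge\frac{1}{2n^2\dmaxmin^2}$ while $\sum_{i\ge3}\alpha_i^2\le 1$ and $(\mu_i/\mu_2)^{2k}\le(\mu_3/\mu_2)^{2k}$. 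The key inequality I must extract from $k\ge\frac{3\log(\sqrt2\, n\dmaxmin)}{\delta\lambda_2}-1$ is that $(\mu_3/\mu_2)^{2(k+1)}\cdot 2n^2\dmaxmin^2$ is small; using $\mu_3/\mu_2=\frac{1-\lambda_3/2}{1-\lambda_2/2}\le 1-\frac{\lambda_3-\lambda_2}{2}\le 1-0\le\ldots$ — actually the cleaner bound is $\mu_3\le\mu_2$ together with $\mu_2=1-\lambda_2/2\le e^{-\lambda_2/2}$, giving $\mu_2^{2(k+1)}\le e^{-\lambda_2(k+1)}\le e^{-3\log(\sqrt2 n\dmaxmin)/\delta}$, which controls the ratio once combined with $\alpha_2^{-2}\le 2n^2\dmaxmin^2$. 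This forces $\rho_k\ge\mu_2^2(1-\delta')$ for a suitably small $\delta'$, hence $\sqrt{\rho_k}\ge\mu_2\sqrt{1-\delta'}\ge(1-\lambda_2/2)(1-\delta'/2)$; then I re-introduce the $\zeta$-approximation slack on $C_1,C_2$ as in part (i), collect all error terms, and verify that with $\zeta\le\delta\lambda_2/12$ and the stated lower bound on $k$ everything combines to $2(1-C_2/C_1)\le(1+\delta)\lambda_2$.

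The main obstacle I anticipate is purely the bookkeeping of error terms: three separate sources of slack (the tail $\sum_{i\ge3}$ suppressed by $k$, the two multiplicative approximation errors $\zeta$, and the linearizations of $1-\lambda_2/2\le e^{-\lambda_2/2}$, $\sqrt{1-x}\ge 1-x$, $\frac{1+\zeta}{1-\zeta}\le 1+3\zeta$, etc.) all have to be shown to fit inside the $\delta\lambda_2$ budget simultaneously, and the constants ($3$ in the exponent of $k$, $12$ in the bound on $\zeta$, $\sqrt2\, n\dmaxmin$ in the overlap) have been chosen precisely so that this works — so the real work is confirming that each linearization is applied in its valid range (e.g. $\zeta<1/2$, which holds since $\delta\le1$ and $\lambda_2\le2$) and that the inequalities chain in the right direction. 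A secondary point to handle carefully is that $C_1,C_2>0$ (so the ratio makes sense), which follows because $v\ne 0$ and $v\in\ImL$ guarantees $\alpha_2\ne 0$ is not needed for positivity — it suffices that some $\alpha_i\ne 0$ with $\mu_i>0$, and $\mu_2>0$ since $G$ is connected means $\lambda_2<2$; I should note this explicitly.
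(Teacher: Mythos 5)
Your part (i) is fine and is essentially the paper's argument: the exact ratio $\snorm{M_G^{k+1}v}/\snorm{M_G^{k}v}$ is at most $1-\lambda_2/2$ because $\rho_k$ is a weighted average of the $\mu_i^2$ over $i\ge 2$, and the $\zeta$-multiplicative slack is absorbed using $\zeta\le\delta\lambda_2/12$.

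Part (ii), however, has a genuine gap at the key step. The bound
\[
\rho_k\ \ge\ \mu_2^2\Bigl(1-\alpha_2^{-2}\sum_{i\ge3}\alpha_i^2(\mu_i/\mu_2)^{2k}\Bigr)
\]
is only useful if $(\mu_3/\mu_2)^{2k}$ is small, i.e., if there is a spectral gap between $\lambda_2$ and $\lambda_3$ --- and no such gap is assumed, nor can the prescribed $k$ (which depends only on $\lambda_2$, $n$, $\dmaxmin$, $\delta$) exploit one. You half-notice this (your aborted estimate $\mu_3/\mu_2\le 1-(\lambda_3-\lambda_2)/2\le 1-0$), but the substitute you propose does not repair it: $\mu_2^{2(k+1)}\le e^{-\lambda_2(k+1)}$ bounds the \emph{absolute} size of the terms, whereas your error term is the \emph{ratio} $(\mu_i/\mu_2)^{2k}$, in which the small factor $\mu_2^{2k}$ sits in the denominator and cancels. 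Concretely, if $\lambda_3=\lambda_2+\epsilon'$ with $\epsilon'$ tiny and $v$ is a good vector with $\alpha_2^2=\tfrac{1}{2n^2\dmaxmin^2}$ and $\alpha_3^2\approx 1$, your lower bound on $\rho_k$ is roughly $\mu_2^2\left(1-2n^2\dmaxmin^2\right)$, which is vacuous (the conclusion of (ii) is still true there, since $\mu_3\approx\mu_2$ forces the ratio close to $\mu_2$, but your index-based decomposition cannot see that). The paper sidesteps the issue entirely with a short H{\"o}lder-inequality step, $\snorm{M_G^{k}v}^2\le\snorm{M_G^{k+1}v}^{2k/(k+1)}\norm{v}^{2/(k+1)}$, which gives $\snorm{M_G^{k+1}v}/\snorm{M_G^{k}v}\ge\snorm{M_G^{k+1}v}^{1/(k+1)}\ge(1-\lambda_2/2)\,|\langle v,u_2\rangle|^{1/(k+1)}$ and needs only the overlap lower bound from goodness, with no reference to $\lambda_3$. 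To salvage your route you would have to split the spectrum at a threshold $\mu_2(1-\eta)$ rather than at the index $i=3$; as written, part (ii) does not go through.
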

\begin{proof}
Let us first prove part (i).
We have
\begin{align*}
\frac{C_2}{C_1}&\le \frac{\norm{M_G^{k+1}v}}{\norm{M_G^{k}v}}
\times\frac{1+\zeta}{1-\zeta}\le (1-\lambda_2/2)
\times\frac{1+\zeta}{1-\zeta}
=\left(1+ \frac{2\zeta}{1-\zeta}\right)(1-\lambda_2/2)
\end{align*}
and thus
\[
2\left(1-\frac{C_2}{C_1}\right)\ge \lambda_2 -\frac{4\zeta}{1-\zeta}+\frac{2\zeta\lambda_2}{1-\zeta}
\ge
\left(1-\frac{4\zeta}{\lambda_2(1-\zeta)}\right)\lambda_2\ge
\left(1-\frac{5\zeta}{\lambda_2}\right)\lambda_2\ge\left(1-\delta\right)\lambda_2,
\]
where the third inequality was obtained from $\zeta\le 1/5$ (from the assumption $\zeta\le \delta\lambda_2/12$).

Let us now prove part (ii). For any unit vector $v$, H{\" o}lder's inequality gives  
\[
\norm{M_G^{k}v}^2\le \left(\sum_i v_i^2\lambda_i^{2k+2}\right)^{k/(k+1)}\left(\sum_i v_i^2\right)^{1/(k+1)}=
\norm{M_G^{k+1}v}^{2k/(k+1)}.
\]
If $v$ is good then
\[
\frac{\snorm{M_G^{k+1}v}}{\snorm{M_G^{k}v}}\ge
\frac{\snorm{M_G^{k+1}v}}{\snorm{M_G^{k+1}v}^{k/(k+1)}}=\snorm{M_G^{k+1}v}^{1/(k+1)}
\ge
\frac{(1-\lambda_2/2)}{(\sqrt{2}n\dmaxmin)^{1/(k+1)}}.
\]
Taking $k+1\ge \frac{3\log(\sqrt{2}n\dmaxmin)}{\delta\lambda_2}$ gives
$
\frac{1}{(\sqrt{2}n\dmaxmin)^{1/(k+1)}}\ge e^{-\delta\lambda_2/3}\ge 1-\delta\lambda_2/3.
$
Let $C_1$ and $C_2$ be as in the statement of the proposition. We have
\begin{align*}
\frac{C_2}{C_1}\ge \frac{\snorm{M_G^{k+1}v}}{\snorm{M_G^{k}v}}
\times\frac{1-\zeta}{1+\zeta}&\ge(1-\lambda_2/2)(1-\delta\lambda_2/3)
\times\left(1-2\zeta\right)\\
&\ge 1-2\zeta -\frac{\delta\lambda_2}{3}+\frac{2\zeta\delta\lambda_2}{3}-\frac{\lambda_2}{2}+\lambda_2\zeta+\frac{\delta\lambda_2^2}{3}\left(\frac{1}{2}-\zeta\right)\\
&\ge 
1-2\zeta -\frac{\delta\lambda_2}{3}-\frac{\lambda_2}{2},
\end{align*}
where the last inequality uses $\zeta\le 1/2$ (which is guaranteed from the assumption $\zeta\le \delta\lambda_2/12$). 
We thus obtain 
$
2\left(1-C_2/C_1\right)\le \left(1+2\delta/3+4\zeta/\lambda_2\right)\lambda_2\le (1+\delta)\lambda_2,
$
as claimed.
\end{proof}

Proposition \ref{proposition:power} requires good multiplicative approximations of $\snorm{M_G^k v}$ and $\snorm{M_G^{k+1} v}$ to approximate $\lambda_2$. Using random walks, we nevertheless will only be able to obtain additive approximations. To convert additive approximations into a good multiplicative approximations, we will need lower bounds on these two quantities. We will also need upper bounds in order to control the running time (and the space complexity) of our algorithm.
The following lemma shows the bounds we will use. 

\begin{lemma}\label{lemma:norm}
Let $\tau$ be any real number such that $\tau\in(0,1]$. For any unit vector $v\in \ImL$, $\snorm{M_G^k v}< \tau$ for all integers $k> \frac{2\log (1/\tau)}{\lambda_2}$. Additionally, if $v$ is good and $n\ge 4$ then $\snorm{M_G^{k+1} v}\ge 2\tau$ for all integers 
\begin{equation}\label{ineq2}
k\le
\frac{\log(1/\tau)-\log(2\sqrt{2}n\dmaxmin)}{2\lambda_2}-1.
\end{equation}
\end{lemma}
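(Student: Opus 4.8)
The plan is to diagonalize $M_G$ on the invariant subspace $\ImL$ and read off both bounds from the resulting series. Since $G$ is connected, $\ImL$ is spanned by $u_2,\dots,u_n$, and $M_G u_i=(1-\lambda_i/2)u_i$ with $0\le 1-\lambda_i/2\le 1-\lambda_2/2\le 1$ for every $i\ge 2$ — the non-negativity being exactly where $\lambda_n\le 2$ is used. Writing a unit vector $v\in\ImL$ as $v=\sum_{i=2}^n \alpha_i u_i$ with $\sum_{i\ge 2}\alpha_i^2=1$, one gets $\snorm{M_G^k v}^2=\sum_{i=2}^n \alpha_i^2 (1-\lambda_i/2)^{2k}$ for every integer $k\ge 0$. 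For the upper bound I would bound each factor $(1-\lambda_i/2)^{2k}$ by $(1-\lambda_2/2)^{2k}$ (monotonicity of $t\mapsto t^{2k}$ on $[0,1]$), so that $\snorm{M_G^k v}\le (1-\lambda_2/2)^k\le e^{-k\lambda_2/2}$ by the elementary estimate $1-x\le e^{-x}$; this is strictly below $\tau$ as soon as $k\lambda_2/2>\log(1/\tau)$, i.e. $k>2\log(1/\tau)/\lambda_2$, which gives the first claim.

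For the lower bound I would instead keep only the $i=2$ term of the (termwise non-negative) series, giving $\snorm{M_G^{k+1}v}\ge|\alpha_2|\,(1-\lambda_2/2)^{k+1}=|\braket{v,u_2}|\,(1-\lambda_2/2)^{k+1}$, and then invoke that $v$ is good, so $|\braket{v,u_2}|\ge \frac{1}{\sqrt 2 n\dmaxmin}$. It remains to lower-bound $(1-\lambda_2/2)^{k+1}$ by an exponential in $\lambda_2(k+1)$. This is where $n\ge 4$ enters: together with the general bound $\lambda_2\le n/(n-1)$ it forces $\lambda_2/2\le 2/3$, and on the range $x\in(0,2/3]$ one has $\log(1-x)\ge -x/(1-x)\ge -3x$, hence $(1-\lambda_2/2)^{k+1}\ge e^{-3\lambda_2(k+1)/2}$. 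Combining, $\snorm{M_G^{k+1}v}\ge \frac{1}{\sqrt 2 n\dmaxmin}\,e^{-3\lambda_2(k+1)/2}$, which is at least $2\tau$ precisely when $3\lambda_2(k+1)/2\le \log(1/\tau)-\log(2\sqrt 2 n\dmaxmin)$, i.e. $k+1\le \frac{2}{3\lambda_2}\big(\log(1/\tau)-\log(2\sqrt 2 n\dmaxmin)\big)$. Since $1/2\le 2/3$, the bound~(\ref{ineq2}) assumed in the statement implies this inequality (and when the right-hand side of~(\ref{ineq2}) is negative there is no non-negative integer $k$ to consider, so there is nothing to prove).

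The only non-mechanical point is the exponential estimate for $(1-\lambda_2/2)^{k+1}$: one must use a spectral upper bound on $\lambda_2$ to keep $1-\lambda_2/2$ bounded away from $0$, so that a clean inequality of the form $1-x\ge e^{-cx}$ is available, and this is precisely why the hypothesis $n\ge 4$ is added and the step I would check most carefully. Everything else reduces to the orthonormal expansion of $v$, monotonicity of $t\mapsto t^{m}$ on $[0,1]$, and the estimate $1-x\le e^{-x}$.
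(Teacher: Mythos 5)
Your proof is correct and follows essentially the same route as the paper: spectral expansion of $v$ in the eigenbasis of $M_G$, the bound $(1-\lambda_2/2)^k\le e^{-k\lambda_2/2}$ for the first claim, and for the second claim retaining only the $u_2$ component (using goodness) together with an exponential lower bound on $(1-\lambda_2/2)^{k+1}$ that exploits $n\ge 4$ and $\lambda_2\le n/(n-1)$ to keep $1-\lambda_2/2$ away from $0$. The only cosmetic difference is the elementary inequality used at that last step — you use $\log(1-x)\ge -x/(1-x)\ge -3x$ on $(0,2/3]$, yielding the constant $2/3>1/2$, whereas the paper uses $(1-1/a)^a\ge e^{-1}-\tfrac{1}{2a}$ and the numerical bound $2/\log(100/3)>1/2$; both are valid.
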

\begin{proof}
For any unit vector $v\in\ImL$ we have
\[
\norm{M_G^kv}\le (1-\lambda_2/2)^k\le e^{-k\lambda_2/2},
\]
which is upper bounded by $\tau$ for $k> \frac{2\log (1/\tau)}{\lambda_2}$.
If $v$ is good we further have
\[
\norm{M_G^{k+1}v}\ge \frac{(1-\lambda_2/2)^{k+1}}{\sqrt{2}n\dmaxmin}\ge
\frac{1}{\sqrt{2}n\dmaxmin}\left(e^{-1}-\frac{\lambda_2}{4} \right)^{(k+1)\lambda_2/2},
\]
where we used the formula 
$
e^{-1}-\frac{1}{2a}\le \left(1-1/a\right)^a
$
valid for any $a\ge 1$ (see, e.g., \cite{Mitrinovic70}) with $a=2/\lambda_2$.
Note that 
\[
e^{-1}-\lambda_2/4\ge e^{-1}-\frac{n}{4(n-1)}>\frac{3}{100}
\]
for $n\ge 4$.
We get $\norm{M_G^{k+1}v}\ge 2\tau$ whenever
\[k+1
\le
\frac{2}{\lambda_2}\times \frac{\log (2\sqrt{2}\tau n\dmaxmin)}{\log(3/100)}
= 
\frac{2}{\log(100/3)}\times\frac{\log(1/\tau)-\log(2\sqrt{2}n\dmaxmin)}{\lambda_2}.
\]
Finally, note that $2/\log(100/3)>1/2$.
\end{proof}

In order to use the the theory developed above,
we need to be able to estimate $\norm{M_G^kv}$ space-efficiently. This can be done using an approach based on quantum walks, similarly to what we did in Section \ref{sec:main}. The description the procedure based on this idea, and its analysis, are given in the appendix. This procedure, denoted $\proc{Estimate-norm}(G,k,v,\epsilon,\gamma)$, computes an $\epsilon$-additive approximation of $\norm{M_G^kv}$ with probability at least $1-\gamma$, for any $k\ge 0$, any $v\in\Sigma$ and any $\epsilon,\gamma\in(0,1]$.
We state the main result of the appendix as the following theorem.

\begin{theorem}\label{th:approximation}
Procedure $\proc{Estimate-norm}(G,k,v,\epsilon,\gamma)$ uses $O(\log (nk\dmaxmin/\epsilon)+\log\log(1/\gamma))$ space and outputs an $\epsilon$-additive approximation of $\norm{M_G^kv}$ with probability at least $1-\gamma$.
\end{theorem}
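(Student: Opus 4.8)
The plan is to mimic the structure of Algorithm $\mathcal{A}$ from Section \ref{sec:main}, but now targeting $\norm{M_G^k v}^2$ rather than a single entry of $L_G^\dagger b$. First I would expand the squared norm spectrally: writing $v=(v_1,\dots,v_n)$ and recalling $M_G = \frac12(I + D_G^{1/2}P_G D_G^{-1/2})$, we have
\[
\norm{M_G^k v}^2 = \braket{v, M_G^{2k} v} = \sum_{\ell,m} v_\ell v_m \, \bigl(M_G^{2k}\bigr)[\ell,m].
\]
Since $M_G = \frac12(I+D_G^{1/2}P_G D_G^{-1/2})$, its $(2k)$-th power expands by the binomial theorem into a weighted sum of terms $D_G^{1/2} P_G^p D_G^{-1/2}$ with $p$ ranging over $0,\dots,2k$ and coefficient $2^{-2k}\binom{2k}{p}$; equivalently, $M_G^{2k}$ is $D_G^{1/2}$ times the $2k$-step transition matrix of the \emph{lazy} random walk $Q_G = \frac12(I+P_G)$ times $D_G^{-1/2}$. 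So $\bigl(M_G^{2k}\bigr)[\ell,m] = \sqrt{d_m/d_\ell}\,\braket{e_m, e_\ell Q_G^{2k}}$, and $\braket{e_m,e_\ell Q_G^{2k}}$ is exactly the probability that a $2k$-step lazy walk started at $\ell$ ends at $m$. This probability can be estimated by simulation in $O(\log(nk\dmaxmin))$ space: one tosses a coin at each step to decide whether to stay or to take a $P_G$-step, for $2k$ steps.

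Next I would describe the sampling algorithm. For the outer structure: iterate $\ell$ from $1$ to $n$, and for each $\ell$ run $r$ independent lazy walks of length $2k$ from $\ell$, recording for each target $m$ the empirical frequency $\hat q_{\ell,m}$ of ending at $m$ (or, more space-efficiently, one fixes a pair $(\ell,m)$ at a time and runs $r$ walks, keeping only a counter). Then form
\[
\widehat{\norm{M_G^k v}^2} = \sum_{\ell,m} v_\ell v_m \sqrt{d_m/d_\ell}\,\hat q_{\ell,m},
\]
and output its square root (clamped to $[0,1]$, say). The error analysis runs exactly as in Theorem \ref{th:algA}: choosing the per-pair additive accuracy $\delta$ of each $\hat q_{\ell,m}$ small enough — roughly $\delta = \Theta\bigl(\epsilon^2 / (n^2 \dmaxmin)\bigr)$ after accounting for the coefficients $|v_\ell v_m|\sqrt{d_m/d_\ell} \le \dmaxmin^{1/2}$ and the $n^2$ terms — guarantees the sum is within $O(\epsilon^2)$ of $\norm{M_G^k v}^2$, hence the square root is within $\epsilon$; note $v\in\Sigma$ means $v$ has only two nonzero coordinates, so in fact only $O(1)$ pairs $(\ell,m)$ contribute and the $n^2$ can be replaced by $O(1)$, loosening the required $\delta$ to $\Theta(\epsilon^2/\dmaxmin)$ but this only helps. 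Each $\hat q_{\ell,m}$ is within $\delta$ of the true probability with probability $1-\zeta$ for $r = \Theta(\delta^{-2}\log(1/\zeta))$ by a Chernoff bound, and a union bound over the $O(1)$ (or $O(n^2)$) pairs with $\zeta = \gamma/\mathrm{poly}(n)$ gives overall success probability $1-\gamma$. The space used is dominated by the step counter (up to $2k$), the vertex register ($O(\log n)$), the counters $S$ and $r$ (of size $O(\log(r)) = O(\log(nk\dmaxmin/\epsilon) + \log\log(1/\gamma))$), and the arithmetic on $O(\log m)$-bit reals, all within the claimed bound.

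The main obstacle, as in Section \ref{sec:main}, is not the error analysis — which is a routine propagation of additive errors through a bounded-coefficient sum — but verifying that the whole procedure fits in logarithmic space: one must be careful that the nested loops over $j$, $k$, $\ell$, $m$ and the $r$ repetitions are organized so that only a constant number of $O(\log m)$-bit registers are live at any time, and in particular that simulating a single lazy-walk step only requires reading a local description of $G$'s neighborhood (one coin flip plus one weighted neighbor choice), never storing an $n$-dimensional intermediate vector. Given the conventions on the arithmetic model stated in the Preliminaries, and the fact that $\dmaxmin$ and $1/\lambda_2$ are $\mathrm{poly}(n)$ for polynomially bounded weights, this is straightforward to check but needs to be stated explicitly; everything else follows the template of Theorem \ref{th:algA}.
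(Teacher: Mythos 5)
Your proposal is correct, but it takes a genuinely different route from the paper's. The paper works coordinate by coordinate: it expands $M_G^k=D_G^{1/2}\big(\sum_{s=0}^k 2^{-k}\binom{k}{s}P_G^s\big)D_G^{-1/2}$, estimates each of the $n$ coordinates $\Gamma_i$ of $v^tM_G^k$ by combining $s$-step (non-lazy) walk probabilities from the two support vertices of $v$ with \emph{separately estimated} binomial weights $2^{-k}\binom{k}{s}$ (this requires an extra Monte Carlo lemma, Lemma~\ref{lemma:binomial}), and then outputs $\sqrt{\sum_i\Gamma_i^2}$, with an error budget $\delta=\Theta\big(\epsilon^2/((k+1)n\dmaxmin)\big)$ and a union bound over $\Theta(nk)$ events. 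You instead pass to the quadratic form $\norm{M_G^kv}^2=\braket{v,M_G^{2k}v}$ and observe that $M_G^{2k}=D_G^{1/2}Q_G^{2k}D_G^{-1/2}$ where $Q_G=\frac12(I+P_G)$ is the lazy walk, so that only the four lazy-walk transit probabilities between the two support vertices of $v\in\Sigma$ need to be estimated, and the binomial weights are realized implicitly by the laziness coins; this eliminates both the loop over all $n$ coordinates and the separate binomial estimator, loosens $\delta$ to $\Theta(\epsilon^2/\dmaxmin)$, and shrinks the union bound to $O(1)$ events, although the asymptotic space bound is unchanged. The paper's version has the advantage of being a verbatim reuse of the template of Theorem~\ref{th:algA}. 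Two small points to tidy in a write-up: with the paper's row-stochastic convention the conjugation gives $(M_G^{2k})[\ell,m]=\sqrt{d_\ell/d_m}\,\braket{e_m,e_\ell Q_G^{2k}}$, so your factor $\sqrt{d_m/d_\ell}$ is the transposed entry --- harmless here only because the quadratic form symmetrizes over $(\ell,m)$; and the passage from an $O(\epsilon^2)$-accurate estimate of the squared norm to an $\epsilon$-accurate norm should be justified explicitly via $|\sqrt{a}-\sqrt{b}|\le\sqrt{|a-b|}$ (or the paper's two-case argument), together with the clamping you mention to handle a possibly negative estimate.
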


Our algorithm for estimating $\lambda_2$, denoted Algorithm $\mathcal{B}$, is given in Figure \ref{fig:algorithmB} and analyzed in Theorem \ref{th:l2} below. 

\begin{figure}[ht!]
\begin{center}
\fbox{
\begin{minipage}{15.8 cm} \vspace{-4mm}
\begin{codebox}
\zi
Input: an undirected, weighted and connected graph $G$ on $n$ vertices, where $n\ge 4$,\\

\hspace{11mm} a precision parameter $\delta\in(0,1]$, an error parameter $\gamma\in(0,1]$
 
\li
$\tau\leftarrow 
\frac{1}{2(\sqrt{2}n\dmaxmin)^{1+8/\delta}}$;
$\epsilon\leftarrow \frac{\delta\lambda\tau}{12}$; 
$\zeta\leftarrow \frac{4\gamma}{n(n-1)}\times\left(1+\frac{\log(1/\tau)}{\lambda_2}\right)^{-1}$;
\li
$R_{\max} \leftarrow 0$;
\li
\For all $v\in \Sigma$ \Do
\li
$k\leftarrow 1$; 
\li
$C_1\leftarrow \proc{Estimate-norm}(G,1,v,\epsilon,\zeta)$; 
\hspace{4mm}\# $C_1$ will store an approximation of $\norm{M_G^{k}v}$
\li
$C_2\leftarrow \proc{Estimate-norm}(G,2,v,\epsilon,\zeta)$; 
\hspace{4mm}\# $C_2$ will store an approximation of $\norm{M_G^{k+1}v}$
\li
\Repeat\hspace{3mm} until $C_2<3\tau/2$
\li
\If $C_2/C_1>R_{\max}$ 
\li
\Then $R_{\max}\leftarrow C_2/C_1$;
\End
\li
$k\leftarrow k +1$;
\li
$C_1\leftarrow C_2$;
\li
$C_2\leftarrow \proc{Estimate-norm}(G,k+1,v,\epsilon,\zeta)$;
\End
\End
\li
Output $2(1-R_{\max})$.
\end{codebox}
\end{minipage}
}
\end{center}\vspace{-4mm}
\caption{Algorithm $\mathcal{B}$ computing an $\delta$-multiplicative approximation of $\lambda_2$ with probability at least $1-\gamma$.}\label{fig:algorithmB}\vspace{-2mm}
\end{figure}

\begin{theorem}\label{th:l2}
Let $G$ be an undirected weighted connected graph and
$\lambda\in(0,2]$ be a lower bound on~$\lambda_2$.
Algorithm $\mathcal{B}$ outputs a $\delta$-multiplicative approximation of $\lambda_2$ with probability at least $1-\gamma$, and uses $O(\frac{1}{\delta}\log (n\dmaxmin/\lambda)+\log\log(1/\gamma))$ space.
\end{theorem}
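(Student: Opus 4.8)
The plan is to argue that Algorithm $\mathcal B$ is a space-efficient implementation of the power method of Proposition \ref{proposition:power}, applied to the good vector whose existence is guaranteed by Lemma \ref{lemma:good}, with all the norm estimates produced by $\proc{Estimate-norm}$ being accurate enough once the additive guarantee of Theorem \ref{th:approximation} is turned into a multiplicative one. Throughout the correctness argument I condition on the event that every call to $\proc{Estimate-norm}$ made during the run returns a correct $\epsilon$-additive approximation, and estimate the probability of this event at the end. Two elementary facts will be used repeatedly: every $v\in\Sigma$ lies in $\ImL$ and is nonzero, and $\snorm{M_G^{k+1}v}\le\snorm{M_G^kv}$ for every $k$, since $M_G$ maps $\ImL$ into itself with operator norm $1-\lambda_2/2<1$ on that subspace.

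For correctness, note first that whenever the body of the inner loop is executed we have $C_2\ge 3\tau/2$, so $\snorm{M_G^{k+1}v}\ge C_2-\epsilon\ge\tau$ (using $\epsilon=\delta\lambda\tau/12\le\tau/2$, which follows from $\delta\le 1$ and $\lambda\le 2$), and hence also $\snorm{M_G^{k}v}\ge\snorm{M_G^{k+1}v}\ge\tau$. Therefore $C_1$ and $C_2$ are $(\epsilon/\tau)$-multiplicative approximations of $\snorm{M_G^kv}$ and $\snorm{M_G^{k+1}v}$, and since $\epsilon/\tau=\delta\lambda/12\le\delta\lambda_2/12$ (here $\lambda\le\lambda_2$), both hypotheses of Proposition \ref{proposition:power} hold with multiplicative precision $\epsilon/\tau$. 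Applying part (i) to every ratio $C_2/C_1$ that is compared to $R_{\max}$ gives $C_2/C_1\le 1-(1-\delta)\lambda_2/2$, and since $R_{\max}$ is the maximum of these ratios, $2(1-R_{\max})\ge(1-\delta)\lambda_2$ when the algorithm halts.

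For the matching upper bound I would follow a good vector $v^\star\in\Sigma$. By the second part of Lemma \ref{lemma:norm} (using $n\ge 4$) we have $\snorm{M_G^{k+1}v^\star}\ge 2\tau$ — so $C_2\ge 3\tau/2$, the loop is still running, and the ratio at index $k$ is compared to $R_{\max}$ — for every integer $k$ up to $A:=\frac{\log(1/\tau)-\log(2\sqrt2 n\dmaxmin)}{2\lambda_2}-1$. With $\tau=\tfrac{1}{2(\sqrt2 n\dmaxmin)^{1+8/\delta}}$, so that $\log(1/\tau)-\log(2\sqrt2 n\dmaxmin)=(8/\delta)\log(\sqrt2 n\dmaxmin)$, one gets $A=\frac{4\log(\sqrt2 n\dmaxmin)}{\delta\lambda_2}-1$, whereas the threshold of Proposition \ref{proposition:power}(ii) is $B:=\frac{3\log(\sqrt2 n\dmaxmin)}{\delta\lambda_2}-1$; thus $A-B=\frac{\log(\sqrt2 n\dmaxmin)}{\delta\lambda_2}\ge 1$, because $\delta\lambda_2\le\lambda_2\le n/(n-1)\le\log(\sqrt2 n\dmaxmin)$ for $n\ge 4$. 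Hence $[B,A]$ contains an integer $k$, which the loop reaches while the corresponding ratio is still tracked, and for that $k$ the values $C_1,C_2$ again satisfy the hypotheses of Proposition \ref{proposition:power} (same argument as above, since $\snorm{M_G^{k+1}v^\star}\ge 2\tau>\tau$); part (ii) then yields $C_2/C_1\ge 1-(1+\delta)\lambda_2/2$, whence $R_{\max}\ge 1-(1+\delta)\lambda_2/2$ and $2(1-R_{\max})\le(1+\delta)\lambda_2$. Combining the two bounds, $|2(1-R_{\max})-\lambda_2|\le\delta\lambda_2$, which is the claimed $\delta$-multiplicative approximation.

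It remains to bound the failure probability and the space. By the first part of Lemma \ref{lemma:norm}, $\snorm{M_G^{m}v}<\tau$ once $m>\frac{2\log(1/\tau)}{\lambda_2}$, so (again using $\epsilon\le\tau/2$) $C_2$ eventually drops below $3\tau/2$ and the inner loop for each $v$ terminates after $O\!\big(1+\tfrac{\log(1/\tau)}{\lambda_2}\big)$ iterations, each making one call to $\proc{Estimate-norm}$; over the $n(n-1)/2$ vectors of $\Sigma$ this totals $O\!\big(n^2(1+\tfrac{\log(1/\tau)}{\lambda_2})\big)$ calls, and $\zeta$ is defined precisely so that the union bound over all of them is at most $\gamma$. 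Each call costs $O(\log(nk\dmaxmin/\epsilon)+\log\log(1/\zeta))$ space by Theorem \ref{th:approximation}; substituting $\log(1/\tau)=O(\tfrac1\delta\log(n\dmaxmin))$, $k=O(\tfrac{\log(1/\tau)}{\lambda})$, $1/\epsilon=O(\tfrac1{\delta\lambda\tau})$ and $1/\zeta=O\!\big(\tfrac{n^2}{\gamma}(1+\tfrac{\log(1/\tau)}{\lambda})\big)$ gives the bound $O(\tfrac1\delta\log(n\dmaxmin/\lambda)+\log\log(1/\gamma))$, which also absorbs the space used for the iteration counter $k$ and for indexing the current element of $\Sigma$. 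I expect the crux to be the third paragraph: one must check quantitatively that, on the good vector, the power iteration stays alive (the test $C_2\ge 3\tau/2$) long enough to reach the regime $k\ge B$ demanded by Proposition \ref{proposition:power}(ii) — and it is exactly this requirement, through the inequality $A-B\ge 1$, that fixes the somewhat unusual value chosen for $\tau$.
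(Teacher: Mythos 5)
Your proposal is correct and follows essentially the same route as the paper's own proof: condition on every call to $\proc{Estimate-norm}$ succeeding, use the threshold $3\tau/2$ together with $\epsilon\le\tau/2$ to turn the additive guarantees into $(\epsilon/\tau)$-multiplicative ones, apply Proposition~\ref{proposition:power}(i) to every tracked ratio for the lower bound and Proposition~\ref{proposition:power}(ii) to a good vector at an index lying in the window created by the choice of $\tau$ for the upper bound, and finish with the union bound and the space accounting. The only (minor, and in fact slightly more careful) deviation is that you justify $\snorm{M_G^{k}v}\ge\tau$ via the monotonicity of $k\mapsto\snorm{M_G^{k}v}$ on $\ImL$ rather than asserting $C_1>3\tau/2$ directly as the paper does.
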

\begin{proof}
Let us first analyze Algorithm $\mathcal{B}$ under the assumption that at Steps 5, 6 and 12, Procedure $\proc{Estimate-norm}(G,k,v,\epsilon,\zeta)$ always correctly outputs an $\epsilon$-additive approximation of $\snorm{M_G^kv}$. Then during the execution of the algorithm, $C_1$ and $C_2$ are $\epsilon$-additive approximations of $\snorm{M_G^kv}$ and $\snorm{M_G^{k+1}v}$, respectively. 

Lemma~\ref{lemma:norm} guarantees that for each $v\in\Sigma$, 
the inequality $\snorm{M_G^kv}< \tau$ holds for all integers $k> 2\log(1/\tau)/\lambda_2$, in which case we have $C_1<\tau+\epsilon<3\tau/2$ (the same inequality holds for $C_2$) since $\epsilon<\tau/2$. For each $v\in\Sigma$, the loop of Steps 7-12 is thus repeated at most $2\log(1/\tau)/\lambda_2$ times.

Whenever the ratio $C_2/C_1$ is computed at Step 8-9, we have $C_1>3\tau/2$ and $C_2>3\tau/2$. Since $\epsilon<\tau/2$, this means that $\snorm{M_G^kv}>\tau$ and $\snorm{M_G^{k+1}v}>\tau$. In this case the quantities $C_1$ and $C_2$ are thus also $\frac{\epsilon}{\tau}$-multiplicative approximations of $\snorm{M_G^{k}v}$ and $\snorm{M_G^{k+1}v}$, respectively. From part (i) of Proposition \ref{proposition:power} with $\zeta=\epsilon/\tau$, we conclude that 
$
2(1-R_{\max})\ge (1-\delta)\lambda_2.
$

Let $v$ be a good vector.
Observe that with the choice of $\tau$ made at Step 1 we have 
\[
\frac{\log(1/\tau)-\log(2\sqrt{2}n\dmaxmin)}{2\lambda_2}= 
\frac{4\log(\sqrt{2}n\dmaxmin)}{\delta\lambda_2}\ge 
\frac{3\log(\sqrt{2}n\dmaxmin)}{\delta\lambda_2}+1
\]
for $n\ge 4$.
There thus necessarily exists at least one integer $k$ satisfying both the condition of Part~(ii) of Proposition \ref{proposition:power} 
and Inequality~(\ref{ineq2}). 
From Lemma~\ref{lemma:norm}, for such a $k$ we have $\snorm{M_G^{k+1} v}\ge 2\tau$, which gives the lower bound $C_2\ge 2\tau-\epsilon\ge 3\tau/2$ (the same inequality holds for $C_1$), and implies that the ratio $C_2/C_1$ is computed at Steps 8-9. Part (ii) of Proposition \ref{proposition:power} and Lemma \ref{lemma:good} thus imply
$
2(1-R_{\max})\le (1+\delta)\lambda_2,
$
as claimed.

From Theorem \ref{th:approximation}, each call of Procedure $\proc{Estimate-norm}$ requires at most 
\[
O(\log (nk\dmaxmin/\epsilon)+\log\log(1/\zeta))=O(\frac{1}{\delta}\log (n\dmaxmin/\lambda)+\log\log(1/\gamma))
\]
space. This bound is also an upper bound on the space complexity of all the other computational steps of Algorithm $\mathcal{B}$.
Let us conclude by discussing the success probability of this algorithm. Each application of $\proc{Estimate-norm}$ errs with probability at most $\zeta$, from Theorem \ref{th:approximation}. There are at most $2+2\log(1/\tau)/\lambda_2$ calls to this procedure. The success probability is thus at least
$
1-|\Sigma|\left(2+2\log(1/\tau)/\lambda\right)\zeta\ge 1-\gamma.
$
\end{proof}
Theorem \ref{th:l2} implies Theorem \ref{th:main2} by observing again that $\dmaxmin/\lambda$ can be upper bounded by a polynomial in $n$ when the weights are polynomially bounded (if $G$ is not connected then we apply Theorem \ref{th:l2} on each connected component and taking the minimum of the estimations obtained).

\section*{Acknowkedgments}
The author is grateful to Richard Cleve, Hirotada Kobayashi, Harumichi Nishimura, Suguru Tamaki and Ryan Williams for helpful comments. This work is supported by the Grant-in-Aid for Young Scientists~(A) No.~16H05853, the Grant-in-Aid for Scientific Research~(A) No.~16H01705, and the Grant-in-Aid for Scientific Research on Innovative Areas~No.~24106009 of the Japan Society for the Promotion of Science and the Ministry of Education, Culture, Sports, Science and Technology in Japan. 


\begin{thebibliography}{10}

\bibitem{Berkovitz84}
Stuart~J. Berkovitz.
\newblock On computing the determinant in small parallel time using a small
  number of processors.
\newblock {\em Information Processing Letters}, pages 147--150, 1984.

\bibitem{Borodin+82}
Allan Borodin, Joachim von~zur Gathen, and John~E. Hopcroft.
\newblock Fast parallel matrix and {GCD} computations.
\newblock {\em Information and Control}, 52(3):241--256, 1982.

\bibitem{Chung97}
Fan R.~K. Chung.
\newblock {\em Spectral Graph Theory}.
\newblock American Mathematical Society, 1997.

\bibitem{Chung+15}
Fan R.~K. Chung and Olivia Simpson.
\newblock Solving local linear systems with boundary conditions using heat
  kernel {Pagerank}.
\newblock {\em Internet Mathematics}, 11:4--5, 2015.

\bibitem{Csanky+FOCS75}
Laszlo Csanky.
\newblock Fast parallel matrix inversion algorithms.
\newblock In {\em Proceedings of the 16th Annual Symposium on Foundations of
  Computer Science}, pages 11--12, 1975.

\bibitem{Doron+ICALP15}
Dean Doron and Amnon Ta{-}Shma.
\newblock On the problem of approximating the eigenvalues of undirected graphs
  in probabilistic logspace.
\newblock In {\em Proceedings of the 42nd International Colloquium}, pages
  419--431, 2015.

\bibitem{Glynn87}
Peter~W. Glynn.
\newblock Upper bounds on {Poisson} tail probabilities.
\newblock {\em Operation Research Letters}, 6(1):9--14, 1987.

\bibitem{Kelner+STOC13}
Jonathan~A. Kelner, Lorenzo Orecchia, Aaron Sidford, and Zeyuan~Allen Zhu.
\newblock A simple, combinatorial algorithm for solving {SDD} systems in
  nearly-linear time.
\newblock In {\em Proceedings of the 45th Symposium on Theory of Computing},
  pages 911--920, 2013.

\bibitem{Mitrinovic70}
Dragoslav Mitrinovi{\' c}.
\newblock {\em Analytic Inequalities}.
\newblock Springer, 1970.

\bibitem{Mitzenmacher+05}
Michael Mitzenmacher and Eli Upfal.
\newblock {\em Probability and Computing}.
\newblock Cambridge University Press, 2005.

\bibitem{Peng+STOC14}
Richard Peng and Daniel~A. Spielman.
\newblock An efficient parallel solver for {SDD} linear systems.
\newblock In {\em Proceedings of the 46th Symposium on Theory of Computing},
  pages 333--342, 2014.

\bibitem{Spielman+STOC04}
Daniel~A. Spielman and Shang{-}Hua Teng.
\newblock Nearly-linear time algorithms for graph partitioning, graph
  sparsification, and solving linear systems.
\newblock In {\em Proceedings of the 36th Annual Symposium on Theory of
  Computing}, pages 81--90, 2004.

\bibitem{Spielman+11}
Daniel~A. Spielman and Shang{-}Hua Teng.
\newblock Spectral sparsification of graphs.
\newblock {\em {SIAM} Journal on Computing}, 40(4):981--1025, 2011.

\bibitem{Spielman+13}
Daniel~A. Spielman and Shang{-}Hua Teng.
\newblock A local clustering algorithm for massive graphs and its application
  to nearly linear time graph partitioning.
\newblock {\em {SIAM} Journal on Computing}, 42(1):1--26, 2013.

\bibitem{Spielman+14}
Daniel~A. Spielman and Shang{-}Hua Teng.
\newblock Nearly linear time algorithms for preconditioning and solving
  symmetric, diagonally dominant linear systems.
\newblock {\em {SIAM} Journal on Matrix Analysis and Applications},
  35(3):835--885, 2014.

\bibitem{TaShmaSTOC13}
Amnon Ta{-}Shma.
\newblock Inverting well conditioned matrices in quantum logspace.
\newblock In {\em Proceedings of the 45th Symposium on Theory of Computing},
  pages 881--890, 2013.

\bibitem{Vishnoi13}
Nisheeth~K. Vishnoi.
\newblock {\em $Lx=b$ --- Laplacian Solvers and their Algorithmic
  Applications}.
\newblock Now publishers, 2013.

\end{thebibliography}

\section*{Appendix: Estimating $\norm{M_G^k v}$}
In this appendix we explain how to space-efficiently estimate $\norm{M_G^k v}$ for any vector $v\in\Sigma$, and prove Theorem \ref{th:approximation}.

The following lemma first shows how to space-efficiently estimate the quantity $2^{-k}{k \choose s}$. 
\begin{lemma}\label{lemma:binomial}
Let $\delta$ and $\zeta$ be any real numbers such that $\delta\in(0,1)$ and $\zeta>0$. There exists a $O(\log{(ks/\delta)+\log\log(1/\zeta)})$-space algorithm that, when given as input two integers $k\ge 1$ and $s\in\{0,\ldots k\}$, outputs an $\delta$-additive approximation of $2^{-k}{k \choose s}$ with probability $1-\zeta$.
\end{lemma}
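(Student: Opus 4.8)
\emph{Proof proposal.} The plan is to estimate $2^{-k}\binom{k}{s}$ by exactly the Monte Carlo strategy already used in the proof of Lemma~\ref{lemma:Poisson}, now made even simpler by the observation that $2^{-k}\binom{k}{s}$ is \emph{precisely} the probability $\Pr[X=s]$ that a binomial random variable $X$ with parameters $k$ and $1/2$ equals $s$ --- i.e., the probability that exactly $s$ out of $k$ independent fair coin flips come up heads. In contrast with the Poisson case, where the binomial approximation $\Pr[X_n=k]$ only matched $\Poisson_s(k)$ up to a small error (Inequality~(\ref{ineq:Poisson})), here the quantity to be estimated equals an honest probability, so the sampling estimator will be unbiased and no preliminary approximation step is needed.

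Concretely, I would have the algorithm keep a counter $C$ initialized to $0$ and repeat $m=\ceil{2\log(2/\zeta)/\delta^2}$ independent trials of the following experiment: flip $k$ fair coins one at a time, maintaining a single running counter $h$ for the number of heads, and increment $C$ by one if $h=s$ at the end of the trial; finally output $C/m$. The expected value of $C/m$ is exactly $\Pr[X=s]=2^{-k}\binom{k}{s}$, and the $m$ trials are mutually independent, so by the Chernoff (Hoeffding) bound $\Pr\big[\,|C/m-2^{-k}\binom{k}{s}|>\delta\,\big]\le 2e^{-2m\delta^2}\le\zeta$, which gives the desired $\delta$-additive approximation with probability at least $1-\zeta$.

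For the space analysis I would note that the only values stored are: the index $s$ (at most $\log k$ bits, since $s\le k$), the within-trial head counter $h$ (at most $\log k$ bits), and the trial-index counter and the counter $C$ (each at most $\log m = O(\log(1/\delta)+\log\log(1/\zeta))$ bits), together with $O(1)$ random bits at a time used to simulate the coin flips and then discarded. This yields total space $O(\log k+\log(1/\delta)+\log\log(1/\zeta))=O(\log(ks/\delta)+\log\log(1/\zeta))$, as claimed.

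I do not expect any real obstacle here; the only points needing care are to process the $k$ coin flips of each trial sequentially so that $\binom{k}{s}$ --- which can be as large as $2^k$ --- is never formed explicitly and the working memory stays $O(\log k)$, and to fix the number of repetitions $m$ so that the Chernoff bound delivers failure probability at most $\zeta$ while the loop counters remain of size $O(\log(1/\delta)+\log\log(1/\zeta))$.
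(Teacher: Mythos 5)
Your proposal is correct and follows essentially the same route as the paper: both estimate $2^{-k}\binom{k}{s}$ as the empirical frequency of getting exactly $s$ heads in $k$ fair coin flips over $m=O(\log(1/\zeta)/\delta^2)$ independent trials, apply the Hoeffding/Chernoff bound, and note that the counters fit in $O(\log(ks/\delta)+\log\log(1/\zeta))$ space. Your choice of $m$ is slightly more conservative than the paper's, which changes nothing.
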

\begin{proof}
Our algorithm is as follows. The algorithm creates a counter $C$ initialized to zero, and repeat $m=\ceil{\frac{\log(2/\zeta)}{2\delta^2}}$ times the following: take $s$ bits uniformly at random and increment $C$ by one if exactly $k$ among these $s$ bits are one. The algorithm finally outputs $C/m$. Observe that this algorithm can be implemented in space linear in $\log(1/\delta)$, $\log k$, $\log s$ and $\log\log (1/\zeta)$. 

The expected value of the output of this algorithm is precisely $2^{-k}{k \choose s}$. From Chernoff bound, the output of the algorithm is thus a $\delta$-additive approximation of this quantity with probability at least $1-2e^{-2m\delta^2}\ge 1-\zeta$. 
\end{proof}

The procedure $\proc{Estimate-norm}(G,k,v,\epsilon,\gamma)$ estimating $\norm{M_G^k v}$ is described in Figure \ref{fig:algorithmC}.

\begin{figure}[ht!]
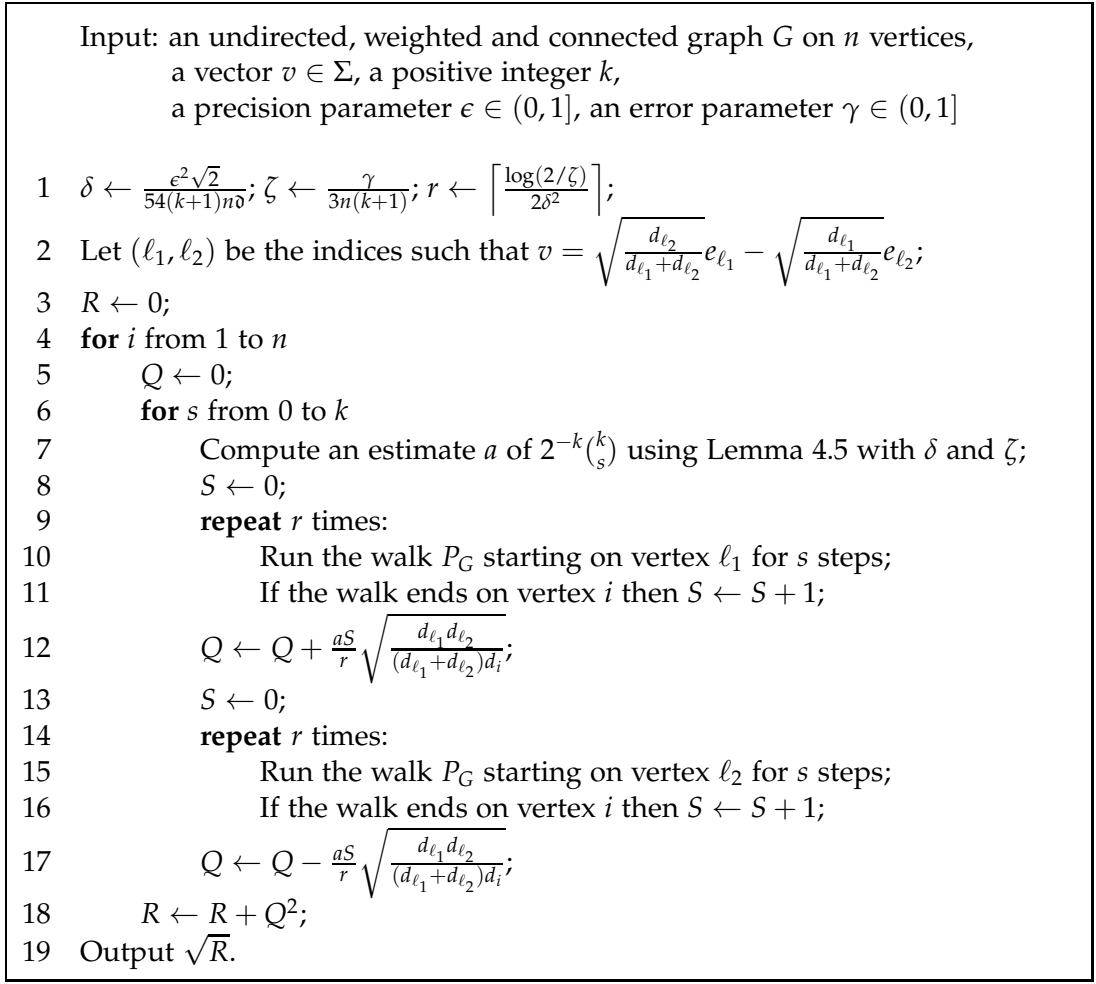

\begin{center}
\fbox{
\begin{minipage}{14 cm} \vspace{-4mm}
\begin{codebox}
\zi
Input: an undirected, weighted and connected graph $G$ on $n$ vertices, 
\zi
\hspace{11mm} a vector $v\in\Sigma$, a positive integer $k$,
\zi
\hspace{11mm}  a precision parameter $\epsilon\in(0,1]$, an error parameter $\gamma\in(0,1]$
\zi
\li
$\delta\leftarrow \frac{\epsilon^2\sqrt{2}}{54(k+1)n\dmaxmin}$; 
$\zeta\leftarrow \frac{\gamma}{3n(k+1)}$; $r\leftarrow \ceil{\frac{\log(2/\zeta)}{2\delta^2}}$;
\li
Let $(\ell_1,\ell_2)$ be the indices such that 
$v=\sqrt{\frac{d_{\ell_2}}{d_{\ell_1}+d_{\ell_2}}}e_{\ell_1}-\sqrt{\frac{d_{\ell_1}}{d_{\ell_1}+d_{\ell_2}}}e_{\ell_2}$;
\li
$R\leftarrow 0$; 
\li
\For $i$ from 1 to $n$ \Do
\li
$Q\leftarrow 0$;
\li
\For $s$ from 0 to $k$ \Do
\li
Compute an estimate $a$ of $2^{-k}{k\choose s}$ using Lemma \ref{lemma:binomial} with $\delta$ and $\zeta$;
\li
$S\leftarrow 0$;
\li
\Repeat\hspace{3mm} $r$ times:
\li
Run the walk $P_G$ starting on vertex $\ell_1$ for $s$ steps;
\li
If the walk ends on vertex $i$ then
$S\leftarrow S+1$;
\End
\li
$Q\leftarrow Q+\frac{aS}{r}\sqrt{\frac{d_{\ell_1}d_{\ell_2}}{(d_{\ell_1}+d_{\ell_2})d_i}}$;
\li
$S\leftarrow 0$;
\li
\Repeat\hspace{3mm} $r$ times:
\li
Run the walk $P_G$ starting on vertex $\ell_2$ for $s$ steps;
\li
If the walk ends on vertex $i$ then
$S\leftarrow S+1$;
\End
\li
$Q\leftarrow Q-\frac{aS}{r}\sqrt{\frac{d_{\ell_1}d_{\ell_2}}{(d_{\ell_1}+d_{\ell_2})d_i}}$;
\End
\li
$R\leftarrow R+Q^2$;
\End
\li 
Output $\sqrt{R}$.
\end{codebox}
\end{minipage}
}
\end{center}\vspace{-4mm}
\caption{Procedure $\proc{Estimate-norm}(G,k,v,\epsilon,\gamma)$ computing an $\epsilon$-additive approximation of $\norm{M_G^kv}$ with probability at least $1-\gamma$.}\label{fig:algorithmC}
\end{figure}

We now prove Theorem \ref{th:approximation} stated in Section \ref{sec:l2}.
\begin{proof}[Proof of Theorem \ref{th:approximation}]
Observe that
\[
M_G^k=\sum_{i=1}^n(1-\lambda_i/2)^k u_i^tu_i=D_G^{1/2}\left(
\sum_{s=0}^k \frac{{k \choose s}}{2^k}P_G^s\right)D_G^{-1/2}.
\]
Moreover $M^k_G$ is symmetric. Let us write 
$
v=\sqrt{\frac{d_{\ell_2}}{d_{\ell_1}+d_{\ell_2}}}e_{\ell_1}-\sqrt{\frac{d_{\ell_1}}{d_{\ell_1}+d_{\ell_2}}}e_{\ell_2}
$
as in Step 2 of the procedure.
We thus have
\begin{align*}
\norm{M_G^k v}&=\norm{ v^t D_G^{1/2}\left(
\sum_{s=0}^k \frac{{k \choose s}}{2^k}P_G^s\right)D_G^{-1/2}}
=
\sqrt{\sum_{i=1}^n
\Gamma_i^2}
\end{align*}
where
\[
\Gamma_i=\sum_{s=0}^k\frac{{k \choose s}}{2^{k}} \left( 
\sqrt{\frac{d_{\ell_1}d_{\ell_2}}{(d_{\ell_1}+d_{\ell_2})d_i}}\braket{e_i,e_{\ell_1} P_G^s}
-
\sqrt{\frac{d_{\ell_1}d_{\ell_2}}{(d_{\ell_1}+d_{\ell_2})d_i}}\braket{e_i,e_{\ell_2} P_G^s}\right).
\]
Note that $|\Gamma_i|\le 1$ for any $i\in\{1,\ldots,n\}$.

At the end of Steps 9-11 we have
\begin{equation}\label{eq:Chernoff2}
\Pr\left[|S/r-\langle e_i,e_{\ell_1} P_G^s\rangle|\le \delta\right]\ge 1-2e^{-2r\delta^2}\ge 1-\zeta,
\end{equation}
using the same argument as in the analysis in Theorem \ref{th:algA}. The same bound, with $\ell_1$ replaced by $\ell_2$, holds at the end of Steps 14-16 as well.
Lemma \ref{lemma:binomial} also shows that $a$ is a $\delta$-additive approximation of $2^{-k}{k\choose s}$ with probability at least $1-\zeta$ at Step 7. Let us continue our analysis under the assumption that all these approximations are correct (we discuss the overall success probability at the end of the proof). At Step 12 we thus have
\[
\left|
\frac{a S}{r} -
\frac{{k \choose s}}{2^{k}}\braket{e_i,e_{\ell_1} P_G^s}
\right|\le 
\left(\delta+\frac{{k \choose s}}{2^{k}}+\braket{e_i,e_{\ell_1}P_G^s}\right)\delta\le 3\delta,
\]
and the same bound, with $\ell_1$ replaced by $\ell_2$, holds at Step 17 as well.
For any $i\in\{1,\ldots,n\}$ we thus have
\[
\left|
Q-
\Gamma_i
\right|\le 6(k+1)\delta \max_{\ell_1,\ell_2\in\{1,\ldots,n\}} \sqrt{\frac{d_{\ell_1}d_{\ell_2}}{(d_{\ell_1}+d_{\ell_2})d_i}}\le \frac{6}{\sqrt{2}}(k+1)\delta \dmaxmin.
\]
at the end of the loop of Steps 5-17, which implies
\[
\left|
Q^2-\Gamma_i^2
\right| = 
\left|
Q-\Gamma_i
\right|\times \left|
Q+\Gamma_i
\right|\le
\frac{18}{\sqrt{2}}(k+1)\delta \dmaxmin.
\]
since $|\Gamma_i|\le 1$ and $|Q|\le |Q-\Gamma_i|+|\Gamma_i|\le 2$. We thus have
\[
\left|
R -
\norm{M_G^k v}^2
\right|
\le \frac{18}{\sqrt{2}}(k+1)n\delta \dmaxmin\le \epsilon^2/3.
\]
at the end of of the algorithm. Let us show that $\sqrt{R}$ is an $\epsilon$-additive estimation of $\norm{M_G^kv}$ by considering two cases.
In the case $\sqrt{|R|}\le \epsilon/3$ we get
\[
\left|
\sqrt{R} -
\norm{M_G^k v}
\right|\le 
\sqrt{R} + \norm{M_G^kv}
\le
\sqrt{R} + \sqrt{R+\left|
R -
\norm{M_G^k v}^2
\right|}
\le 
\epsilon/3 +\sqrt{\epsilon^2/9+ \epsilon^2/3}=\epsilon.
\]
Now in the case $\sqrt{|R|}\ge \epsilon/3$ we get
\[
\left|
\sqrt{R} -
\norm{M_G^k v}
\right|=
\frac{\left|
R -
\norm{M_G^k v}^2
\right| }{\left|
\sqrt{R} +
\norm{M_G^k v}
\right|}
\le \frac{\epsilon^2/3}{\epsilon/3}=\epsilon.
\]

The space complexity of this algorithm is $O(\log(nk/\epsilon)+\log\log \gamma)$, from Lemma \ref{lemma:binomial} and the observation that all other computational steps can be implemented with registers of this size.
 Let us conclude by discussing the success probability of this algorithm. Errors can only occur at Steps 6 or 10. From Inequality (\ref{eq:Chernoff2}) and Lemma \ref{lemma:binomial}, and using the union bound, we know that the overall success probability is at least $1-3\zeta n(k+1)= 1-\gamma$.
\end{proof}

\end{document}